\renewcommand{\marginpar}[1]{}
\newcommand{\myarrow}[3]{\xymatrix@1{#1 \ar[r]^{#2} & #3}}
\renewcommand{\marginpar}[1]{}
\newtheorem{definition}{Definition}
\newtheorem{theorem}{Theorem}
\newtheorem{proposition}{Proposition}
\newtheorem{corollary}{Corollary}
\newtheorem{exemple}{Example}
\newtheorem{Proof}{\it Proof.}
\newenvironment{proof}{\begin{Proof}\rm } {\hfill $\Box$
\end{Proof}}
\newcommand{\Jamal}[1]{\textcolor{black}{#1}}
\begin{document}
\begin{frontmatter}



\title{A finite basis theorem for the description logic ${\cal ALC}$}

\author[France1]{Marc Aiguier\corauthref{cor}},
\corauth[cor]{Corresponding author.}
\ead{marc.aiguier@centralesupelec.fr}
\author[France2]{Jamal Atif},
\ead{jamal.atif@dauphine.fr}
\author[France3]{Isabelle Bloch},
\ead{isabelle.bloch@telecom-paristech.fr}
\author[France1]{C\'eline Hudelot},
\ead{celine.hudelot@centralesupelec.fr}

\address[France1]{MICS, Centrale Supelec, Universit\'e Paris-Saclay, France}
\address[France2]{PSL, Universit\'e Paris-Dauphine, LAMSADE, UMR 7243, France}
\address[France3]{LTCI, CNRS, T\'el\'ecom ParisTech, Universit\'e Paris-Saclay, Paris, France}

\begin{abstract}



The main result of this paper is to prove the existence of a finite basis in the description logic ${\cal ALC}$.  We show that  the set of General Concept Inclusions (GCIs) holding in a finite model has always a finite basis, i.e. these GCIs can be derived from finitely many of the GCIs. This result extends a previous result from Baader and Distel, which showed the existence of a finite basis for GCIs holding in a finite model but for the 
inexpressive description logics ${\cal EL}$ and ${\cal EL}_{gfp}$. We also provide an algorithm for computing this finite basis, and prove its correctness. As a byproduct,  we extend our finite basis theorem to any finitely generated complete covariety (i.e. any class of models closed under morphism domain, coproduct and quotient, and generated from a finite set of finite models). 
\end{abstract}

\begin{keyword}
Description logics; Finite basis theorems; Complete covarieties, Formal methods
\end{keyword}
\end{frontmatter}



\section{Introduction}

Description logics (DLs)~\cite{handbook03} are a family of logic-based knowledge representation formalisms that originate from early knowledge representation systems such as frame-based systems~\cite{Min81} and semantics networks~\cite{Sow91}.  Briefly, theories in DL, so-called knowledge bases, are sets of general concept inclusion axioms (GCIs) of the form $C \sqsubseteq D$ where $C$ and $D$ are concepts, \Jamal{i.e.} expressions freely generated from a set of basic concept names and both operators in  $\{\sqcap,\sqcup,\_^c\}$ and quantifiers in $\{\forall r, \exists r\}$ where $r$ is a binary relation name.
 The DL thus defined is often called ${\cal ALC}$. Both extensions and restrictions of ${\cal ALC}$ have been proposed. Among its restrictions, we have the DL ${\cal EL}$ and its extension ${\cal EL}_{gfp}$ to cyclic concept definitions interpreted with greatest fixpoint semantics. ${\cal EL}$ and ${\cal EL}_{gfp}$ restrict the syntax to the operator $\sqcap$ and the quantifier $\exists r$. Although quite inexpressive, the DLs ${\cal EL}$ and ${\cal EL}_{gfp}$ have good features to allow for efficient reasoning procedures~\cite{BBL05,BLS06}.  Baader and Distel have then shown for ${\cal EL}$ and its extension ${\cal EL}_{gfp}$ that the set of GCIs holding in a finite model always has a finite basis~\cite{Dis11,BD08,BD09}, i.e. a finite subset of GCIs from which 
all the others can be derived. They obtained this result by using methods from formal concept analysis~\cite{GW97}. In this paper, we propose to extend this result for the DL ${\cal ALC}$. For this, we learn from Birkhoff's result established in 50's~\cite{BS81} which shows that there is a finite basis for any finite model in universal algebra. We will further give a simple condition on finite models to effectively build such a finite basis.  Thus, we answer an open problem in~\cite{BD08} but which has not received a positive answer yet to our knowledge. \\ We also propose to extend this first result to any finitely generated complete covarieties of models, i.e. any class of models closed under morphism domain, coproducts and quotients\footnote{This notion of complete covariety is lower than \Jamal{the one} defined in~\cite{GS01}. In~\cite{GS01}, complete covariety is any class of models closed under subcoalgebras, coproduct, quotient and total bisimulations. We will show in Section~\ref{morphism and bisimulation} that the kind of morphisms we consider are functional bisimulations.}, and finitely generated from a finite set of finite models. We will first show results of GCI preservation for complete covarieties\footnote{This is not so surprizing because of the relationship between $\mathcal{ALC}$ and the modal logic~\cite{BHS08}.}. Then, we will show that any complete covariety generated from a finite set of finite models has a finite basis.

\medskip
The paper is organized as follows. In Section~\ref{description logic}, we recall basic definitions and notations about the DL ${\cal ALC}$. We present both descriptive and fixed point semantics, the latter being used to give a meaning to cyclic concept descriptions. The presentation of semantics for cyclic concept descriptions slightly differs from the one traditionally found in DL papers~\cite{Dis11}, which iterates on interpretations to find the expected fixed points. Here, we will iterate on the set of individuals by applying a method that can be compared to the one used to interpret formulas in fixpoint logics~\cite{AN01,Koz83}, as already observed in~\cite{Schild94}.  
In Section~\ref{morphism and bisimulation}, we establish some links between model morphisms and bisimulations, and show how GCIs are preserved under particular morphisms. The aim of this section is to provide basic constructions that are useful for establishing the fundamental result of this paper, and its extension to complete covarieties. In Section~\ref{A finite base theorem} we state and prove the existence of a finite basis theorem for the description logic $\mathcal{ALC}$. This  result is similar to Birkhoff's theorem in universal algebra that proves that, for every finite model $\mathcal{I}$, there is a finite basis. An algorithm for computing such a finite basis is provided, and its theoretical guarantees are discussed, in Section~\ref{algorithm}. In Section~\ref{basic constructions} we extend our finite basis theorem to complete covarieties. To this aim, we give a result of characterization of classes of models similar to Rutten's covariety theorem for coalgebras (cf. Theorem 15.3 in~\cite{RuttUni}) that states that any covariety is determined by a subcoalgebra of the final one.


\section{The DL ${\cal ALC}$}
\label{description logic}

\subsection{Syntax}

Concept descriptions are built from a set $N_C$ of concept names and a set $N_R$ of role names which form the signature $\Sigma = (N_C,N_R)$. 

\begin{definition}[Concept descriptions]
\label{concept expressions}
Let $\Sigma = (N_C,N_R)$ be a signature. The set of {\bf concept descriptions} ${\cal EC}(\Sigma)$ is inductively defined as follows:
\begin{itemize}
\item $\bot,\top \in {\cal EC}(\Sigma)$;
\item $N_C \subseteq {\cal EC}(\Sigma)$;
\item $\forall C,D \in {\cal EC}(\Sigma), C \sqcup D, C \sqcap D, C^c \in {\cal EC}(\Sigma)$;
\item $\forall C \in {\cal EC}(\Sigma), \forall r \in N_R, \forall r.C, \exists r.C \in {\cal EC}(\Sigma)$.
\end{itemize}
\end{definition}

\begin{definition}[General concept inclusions (GCI)]
\label{atomic formulas}
Let $\Sigma$ be a signature. The set of {\bf General Concept Inclusions ($\Sigma$-GCI)} contains all the sentences of the form $C \sqsubseteq D$ and $C \equiv D$ where $C,D \in \mathcal{EC}(\Sigma)$.
\end{definition}

Sentences of the form $c \equiv C$ where $c \in N_C$ and $C \in {\cal EC}(\Sigma)$ are called {\bf concept definitions}.  $c \in N_C$ is called a {\bf defined concept} when it is defined by some sentences $c \equiv C$, and {\bf primitive concept} otherwise.

\begin{exemple}
\label{exemples formules}
The example given here is taken from Distel's PhD thesis~\cite{Dis11}. Other examples can be found in~\cite{handbook03}. \\ From the signature $\Sigma = (N_C,N_R)$ where
\begin{itemize}
\item $N_C = \{Husband,Wife,Male,Female\}$, and
\item $N_R = \{marriedTo\}$.
\end{itemize}
we can define the following GCIs:

\begin{center}
$Husband \sqsubseteq Male$ \\
$Wife \sqsubseteq Female$ \\
$Husband \equiv Male \sqcap \exists marriedTo.\top$ \\
$Wife \equiv Female \sqcap \exists marriedTo.\top$
\end{center}
\end{exemple}



\subsection{Semantics}

\begin{definition}[Model]
Let $\Sigma$ be a signature. A {\bf $\Sigma$-model} ${\cal I}$ is composed of a non-empty set (so-called {\bf carrier}) $\Delta^{\cal I}$ and a mapping $.^{\cal I}$ which associates:
\begin{itemize}
\item every concept name $c \in N_C$ with a subset $c^{\cal I} \subseteq \Delta^{\cal I}$;
\item every role name $r \in N_R$ with a binary relation $r^{\cal I} \subseteq \Delta^{\cal I} \times \Delta^{\cal I}$.
\end{itemize}
\end{definition}

\begin{exemple}
\label{exemple model}
A model $\mathcal{I}$ for the signature of Example~\ref{exemples formules} can be the following:

\begin{itemize}
\item $\Delta^\mathcal{I} = \{Marge,Homer\}$;
\item $marriedTo^\mathcal{I} = \{(Homer,Marge),(Marge,Homer)\}$;
\item $Husband^\mathcal{I} = Male^\mathcal{I} = \{Homer\}$;
\item $Wife^\mathcal{I} = Female^\mathcal{I} = \{Marge\}$.
\end{itemize}
\end{exemple}

\begin{definition}[Concept description evaluation]
\label{concept expression evaluation}
Let $\Sigma$ be a signature. Let ${\cal I}$ be a $\Sigma$-model. Let $C \in {\cal EC}(\Sigma)$ be a concept description. The {\bf evaluation} of $C$, noted $C^{\cal I}$, is inductively defined on the structure of $C$ as follows:

\begin{itemize}
\item if $C = \top$, then $C^\mathcal{I} = \Delta^\mathcal{I}$;
\item if $C = \bot$, then $C^\mathcal{I} = \emptyset$;
\item if $C = c$ with $c \in N_C$, then $C^{\cal I} = c^{\cal I}$;
\item if $C = C' \sqcup D'$ (resp. $C = C' \sqcap D'$), then $C^{\cal I} = C'^{\cal I} \cup D'^{\cal I}$ (resp. $C^{\cal I} = C'^{\cal I} \cap D'^{\cal I}$);
\item if $C = C'^c$, then $C^{\cal I} = \Delta^{\cal I} \setminus C'^{\cal I}$;
\item if $C = \forall r.C'$, then $C^{\cal I} = \{x \in \Delta^{\cal I} \mid \forall  y \in \Delta^{\cal I}, (x,y) \in r^{\cal I} \mbox{ implies } y \in C'^{\cal I}\}$;
\item if $C = \exists r.C'$, then $C^{\cal I} = \{x \in \Delta^{\cal I} \mid \exists  y \in \Delta^{\cal I}, (x,y) \in r^{\cal I} \mbox{ and } y \in C'^{\cal I}\}$.
\end{itemize}
\end{definition}

\begin{definition}[Model satisfaction]
\label{model satisfaction}
Let $\Sigma$ be a signature. Let ${\cal I}$ be a $\Sigma$-model. Let $\varphi$ be a $\Sigma$-GCI. The {\bf satisfaction} of $\varphi$ in the $\Sigma$-model ${\cal I}$, noted ${\cal I} \models \varphi$, is defined according to the form of $\varphi$ as follows:

\begin{itemize}
\item if $\varphi = (C \sqsubseteq D)$, then ${\cal I}Ê\models \varphi$ iff $C^{\cal I}Ê\subseteq D^{\cal I}$;
\item if $\varphi = (C \equiv D)$, then ${\cal I}Ê\models \varphi$ iff $C^{\cal I}Ê\subseteq D^{\cal I}$ and $D^{\cal I}Ê\subseteq C^{\cal I}$.
\end{itemize}
\end{definition}

\begin{exemple}
Obviously, the $\Sigma$-model $\mathcal{I}$ of Example~\ref{exemple model} satisfies all the formulas given in Example~\ref{exemples formules}.
\end{exemple}

\begin{definition}[Semantical consequence]
Let $\mathcal{T}$ be a set of $\Sigma$-GCIs. A $\Sigma$-GCI $\varphi$ is a {\bf semantical consequence} of $\mathcal{T}$, noted $\mathcal{T} \models \varphi$, if for every $\Sigma$-model ${\cal I}$ which satisfies every GCI in $\mathcal{T}$, ${\cal I} \models \varphi$. 
\end{definition}

\begin{exemple}
It is obvious to see that both GCIs $Husband \sqsubseteq Male$ and $Wife \sqsubseteq Female$ are semantical consequences of others formulas given in Example~\ref{exemples formules}. 
\end{exemple}

\subsection{Cyclic concept definitions}
\label{cyclic definition}

The semantics of DL given in Definitions~\ref{concept expression evaluation} and~\ref{model satisfaction} is also called {\em descriptive semantics}~\cite{Nebel91}. 
However, we can have cyclic definitions of concepts, i.e. formulas of the form $c \equiv C$ where $c$ occurs in $C$, from which it is more appropriate to interpret them with the help of fixpoint semantics. \\ Here, we restrict ourselves to simple cyclic definitions. In case of multiple cyclic definitions, i.e. a sequence of concept definitions $c_1 \equiv C_1,\ldots,c_n \equiv C_n$ such that for all $i$, $1 \leq i < n$, $c_{i+1}$ occurs in $C_i$ and $c_1$ occurs in $C_n$, it is sufficient to replace in each equation $c_i = C_i$ the defined concept $c_j$ occurring in $C_i$ by its definition $C_j$. 

\begin{exemple}
\label{example cyclic definitions}
From Example~\ref{exemples formules}, we would be able to decide that a husband is always married to a wife and a wife is always married to a husband which can be expressed by the two following equations:

\begin{center}
$Husband \equiv Male \sqcap \exists marriedTo. Wife$ \\
$Wife \equiv Female \sqcap \exists marriedTo. Husband$
\end{center}
With our restriction to consider simple cyclic definitions, this gives rise to the two following equations:

\begin{center}
$Husband \equiv Male \sqcap \exists marriedTo. (Female \sqcap \exists marriedTo. Husband)$ \\
$Wife \equiv Female \sqcap \exists marriedTo. (Male \sqcap \exists marriedTo. Wife)$
\end{center}
\end{exemple}

\medskip
To be able to get solutions to cyclic concept definitions, all the occurrences of $c$ must be within an even number of the set complementation. In this case, the defined concept name $c$ acts as a fixpoint variable, the content of which can be calculated by iterations to reach the least or the greatest fixpoint. These fixpoints are solutions of the equation $X = f^\lambda_{C}(X)$ over the complete lattice $({\cal P}(\Delta),\subseteq)$ where $\Delta$ is a domain, $\lambda : N_C \cup N_R \to {\cal P}(\Delta) \cup ({\cal P}(\Delta) \times {\cal P}(\Delta))$ is a mapping that associates with each concept name $c' \in N_C$ a subset $\lambda(c') \subseteq \Delta$ and with each role name $r \in N_R$  a binary relation $\lambda(r) \subseteq \Delta \times \Delta$, and $f^\lambda_{C} : {\cal P}(\Delta) \to {\cal P}(\Delta)$ is the mapping that maps each $X \subseteq \Delta$ to the set $Y$ inductively defined on the structure of $C$ as follows, for a cyclic concept $c$:

\begin{itemize}
\item if $C = c$, then $Y =X$;
\item if $C = c'$ with $c' \neq c \in N_C$, then $Y =\lambda(c')$; 
\item if $C = C' \sqcup D'$ (resp. $C = C' \sqcap D'$), then $Y = f^\lambda_{C'}(X) \cup f^\lambda_{D'}(X)$ (resp. $Y = f^\lambda_{C'}(X) \cap f^\lambda_{D'}(X)$);
\item if $C = C'^c$, then $Y = \Delta^{\cal I} \setminus f^\lambda_{C'}(X)$;
\item if $C = \forall r.C'$, then $Y = \{x \in \Delta \mid \forall  y \in \Delta, (x,y) \in \lambda(r) \mbox{ implies } y \in f^\lambda_{C'}(X)\}$;
\item if $C = \exists r.C'$, then $Y = \{x \in \Delta \mid \exists  y \in \Delta, (x,y) \in \lambda(r) \mbox{ and } y \in f^\lambda_{C'}(X)\}$.
\end{itemize} 
The condition which states that the occurrences of $c$ are within the scope of an even number of the set complementation $\_^c$ ensures that the mapping $f^\lambda_{C}$ is monotonous.  Tarski's fixpoint theorem~\cite{Tarski55} says that, for a monotonous function on a complete lattice, the set of fixpoints is nonempty and forms itself a complete lattice. In particular, there are a least and a greatest fixpoints. 

\medskip
Hence, let us denote a cyclic concept definition $c \equiv_{lfp} C$ (resp. $c \equiv_{gfp} C$) when we want to interpret it with the least fixpoint semantics (resp. the greatest fixpoint semantics).  \\
Given a $\Sigma$-model $\mathcal{I}$, we have: 

\begin{itemize}
\item $\mathcal{I} \models c \equiv_{lfp} C$ if and only if $c^\mathcal{I} = \bigcap\{X \in \mathcal{P}(\Delta^\mathcal{I}) \mid f^\lambda_{C}(X) \subseteq X\}$;
\item $\mathcal{I} \models c \equiv_{gfp} C$ if and only if $c^\mathcal{I} = \bigcup\{X \in \mathcal{P}(\Delta^\mathcal{I}) \mid X \subseteq f^\lambda_{C}(X)\}$  .
\end{itemize}
where $\lambda : N_C \cup N_R \to {\cal P}(\Delta) \cup ({\cal P}(\Delta) \times {\cal P}(\Delta))$ is the mapping such that $\lambda(c) = \emptyset$, for every $c' \neq c \in N_C$, $\lambda(c') = {c'}^\mathcal{I}$, and for every $r \in N_R$, $\lambda(r) = r^\mathcal{I}$. 

\begin{exemple}
The model $\mathcal{I}$ of Example~\ref{exemple model} satisfies the equation: 
\[
\begin{aligned}
&Husband  \equiv_{gfp} \\  &Male \sqcap \exists marriedTo. (Female \sqcap \exists marriedTo. Husband)
\end{aligned}
\]

\medskip
Indeed, by interpreting the above equation with the greatest fixpoint semantics, we have that $Husband^\mathcal{I} = \{Homer\}$. To show this, let us recall that the greatest fixpoint can be obtained iteratively by applying the function $f^\lambda_{C}$ to $\Delta^\mathcal{I} = \{Homer,Marge\}$ where $C = Male \sqcap \exists marriedTo. (Female \sqcap \exists marriedTo. Husband)$. After a first step, we then have that 
$$f^\lambda_{C}(\Delta^\mathcal{I})= f^\lambda_{Male}(\Delta^\mathcal{I}) \cap f^\lambda_{C'}(\Delta^\mathcal{I})$$
\noindent
where $C' = \exists marriedTo. (Female \sqcap \exists marriedTo. Husband)$. \\ By definition, we have that $f^\lambda_{Male}(\Delta^\mathcal{I}) = \{Homer\}$ and $f^\lambda_{C'}(\Delta^\mathcal{I})$ is the set 
$$\{x \in \Delta^\mathcal{I} \mid \exists y \in f^\lambda_{C''}(\Delta^\mathcal{I}), (x,y) \in marriedTo^\mathcal{I}\}$$
where $C'' = Female \sqcap \exists marriedTo. Husband$. \\ By definition, we have that $f^\lambda_{C''}(\Delta^\mathcal{I}) = f^\lambda_{Female}(\Delta^\mathcal{I}) \cap \{y \mid \exists x \in f^\lambda_{Husband}(\Delta^\mathcal{I}), (y,x) \in marriedTo^\mathcal{I}\}$. \\ But $f^\lambda_{Female}(\Delta^\mathcal{I}) = \{Marge\}$ and $f^\lambda_{Husband}(\Delta^\mathcal{I}) = \{Hormer,Marge\}$, then 
$$f^\lambda_{\exists marriedTo. Husband}(\Delta^\mathcal{I}) = \{Homer,Marge\}$$ 
\noindent
We then have that $f^\lambda_{C''}(\Delta^\mathcal{I}) = \{Marge\}$, and  $f^\lambda_{C'}(\Delta^\mathcal{I}) = \{Homer\}$, hence we have that $f^\lambda_{C}(\Delta^\mathcal{I}) = \{Homer\}$. If we iterate again, we then have that  $f ^\lambda_C(\{Homer\}) = \{Homer\}$ which is its greatest fixpoint. \\ On the contrary, interpreting the above equation with the least fixpoint semantics should yield $Husband^\mathcal{I} = \emptyset$, and then $\mathcal{I}$ does not satisfy it. 
\end{exemple}

\section{Morphism, bisimulation, preservation result and links with DL $\mathcal{ALC}$}
\label{morphism and bisimulation}

In this section, we establish some links between model morphisms in $\mathcal{ALC}$ and bisimulations. This will allow us, in particular, to generalize our result on the existence of a finite basis theorem for the DL $\mathcal{ALC}$ in Section~\ref{A finite base theorem} to complete covarieties in Section~\ref{basic constructions}.

\begin{definition}[Morphism]
\label{morphism}
Let $\Sigma$ be a signature. Let $\mathcal{I},\mathcal{I}'$ be two $\Sigma$-models. A {\bf morphism}~$\mu$ between $\Delta^\mathcal{I}$ and $\Delta^{\mathcal{I}'}$ is a mapping $\mu : \Delta^\mathcal{I} \to \Delta^{\mathcal{I}'}$ such that:

\begin{enumerate}
\item $\forall c \in N_C, \forall a \in \Delta^\mathcal{I}, a \in c^\mathcal{I} \Longleftrightarrow \mu(a) \in c^{\mathcal{I}'}$;
\item $\forall r \in N_R$,
\begin{enumerate}
\item $\mu(r^\mathcal{I}) \subseteq r^{\mathcal{I}'}$;
\item $\forall a \in \Delta^\mathcal{I}, \forall a' \in \Delta^{\mathcal{I}'}, (\mu(a),a') \in r^{\mathcal{I}'}$ \\ \mbox{} \hfill $\Rightarrow \exists b \in \Delta^\mathcal{I}, (a,b) \in r^\mathcal{I}~\mbox{and}~\mu(b) = a'$.
\end{enumerate}
\end{enumerate}
The morphism $\mu$ is a {\bf monomorphism} (resp. {\bf epimorphism}, resp. {\bf isomorphism}) if it is injective (resp. surjective, resp. bijective). 
\end{definition}

\begin{definition}[Morphism domain]
Let $\mu : \Delta^\mathcal{I} \to \Delta^{\mathcal{I}'}$ be a morphism. The $\Sigma$-model $\mathcal{I}$ is called the {\bf domain} of $\mu$. 
\end{definition}

There is a strong connection between morphisms and bisimulations. Indeed, $\Sigma$-models can be seen as coalgebras~\cite{RuttUni} with a coloring $f : \Delta^\mathcal{I} \to \mathcal{P}(N_C)$ for the functor $F_\Sigma = \mathcal{P}(\_)^{N_R} : Set \to Set$ where $Set$ is the category of sets and $\mathcal{P}$ is the powerset. Hence, given a $\Sigma$-model $\mathcal{I}$, the associated coalgebra is $(\Delta^\mathcal{I},\alpha_\mathcal{I})$ with the coloring $f^\mathcal{I}$ where:

\begin{itemize}
\item $\alpha_\mathcal{I} : \Delta^\mathcal{I} \to F_\Sigma(\Delta^\mathcal{I})$ is the mapping which associates to an individual $a \in \Delta^\mathcal{I}$ the mapping $\alpha_\mathcal{I}(a) : N_R \to \mathcal{P}(\Delta^\mathcal{I})$ such that for every $r \in N_R$, $\alpha_\mathcal{I}(a)(r) = \{b \in \Delta^\mathcal{I}\mid (a,b) \in r^\mathcal{I}\}$. 
\item $f^\mathcal{I} : \Delta^\mathcal{I} \to \mathcal{P}(N_C)$ is the mapping that associates to $a \in \Delta^\mathcal{I}$ the set $\{c~\in N_C \mid a \in c^\mathcal{I}\}$. 
\end{itemize}

Let $\mathcal{I},\mathcal{I}'$ be two $\Sigma$-models. A relation $R \subseteq \Delta^\mathcal{I} \times \Delta^{\mathcal{I}'}$ is a {\bf bisimulation} if there exists a mapping $\alpha_R : R \to F_\Sigma(R)$ such that the projections $\pi_\mathcal{I}$ and $\pi_{\mathcal{I}'}$ from $R$ to $\Delta^\mathcal{I}$ and $\Delta^{\mathcal{I}'}$ are morphisms:

$$\begin{CD}
\Delta^\mathcal{I} @<\pi_\mathcal{I}<<  R @>\pi_{\mathcal{I}'}>> \Delta^{\mathcal{I}'}\\
   @V\alpha_\mathcal{I}VV @VV\alpha_R V @VV\alpha_{\mathcal{I}'}V \\
F_\Sigma(\Delta^\mathcal{I}) @<F_\Sigma(\pi_1)<<  F_\Sigma(R) @>F_\Sigma(\pi_2)>> F_\Sigma(\Delta^{\mathcal{I}'})
\end{CD}$$
and for every $(a,a') \in R$, $f^\mathcal{I}(a) = f^{\mathcal{I}'}(a')$. 

\begin{theorem}[\cite{RuttUni}]
Let $\mathcal{I},\mathcal{I}'$ be two $\Sigma$-models. A mapping $\mu : \Delta^\mathcal{I} \to \Delta^{\mathcal{I}'}$ is a morphism if and only if its graph $G(\mu)$ is a bisimulation between $(\Delta^\mathcal{I},\alpha_\mathcal{I},f^\mathcal{I})$ and $(\Delta^{\mathcal{I}'},\alpha_{\mathcal{I}'},f^{\mathcal{I}'})$. 
\end{theorem}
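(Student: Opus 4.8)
The plan is to prove the two implications separately, in each case simply unfolding the definition of bisimulation given just above against the three defining clauses of a morphism. Throughout I would write $G(\mu) = \{(a,\mu(a)) \mid a \in \Delta^\mathcal{I}\}$ and exploit the fact that, since $G(\mu)$ is the graph of a function, the projection $\pi_\mathcal{I} : G(\mu) \to \Delta^\mathcal{I}$, $(a,\mu(a)) \mapsto a$, is a bijection; this is what lets me move freely between a subset of $G(\mu)$ and its image under either projection.

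For the direction ``$\mu$ morphism $\Rightarrow G(\mu)$ bisimulation'', I would first supply the missing coalgebra structure by defining $\alpha_{G(\mu)} : G(\mu) \to F_\Sigma(G(\mu))$ by $\alpha_{G(\mu)}(a,\mu(a))(r) = \{(b,\mu(b)) \mid (a,b) \in r^\mathcal{I}\}$ for every $r \in N_R$. The coloring requirement $f^\mathcal{I}(a) = f^{\mathcal{I}'}(\mu(a))$ for $(a,\mu(a)) \in G(\mu)$ is exactly the first defining clause of a morphism, since that clause states $a \in c^\mathcal{I} \Leftrightarrow \mu(a) \in c^{\mathcal{I}'}$ for every $c \in N_C$. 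Commutativity of the left square is immediate from the chosen $\alpha_{G(\mu)}$: both $\alpha_\mathcal{I}(a)(r)$ and $F_\Sigma(\pi_\mathcal{I})(\alpha_{G(\mu)}(a,\mu(a)))(r)$ equal $\{b \mid (a,b) \in r^\mathcal{I}\}$.

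The only genuine work is the commutativity of the right square, and this is where I expect the main obstacle to lie: it is precisely here that clauses 2(a) and 2(b) are used, and one must check that together they yield an equality of sets rather than a mere inclusion. Explicitly, $F_\Sigma(\pi_{\mathcal{I}'})(\alpha_{G(\mu)}(a,\mu(a)))(r) = \{\mu(b) \mid (a,b) \in r^\mathcal{I}\}$, while $\alpha_{\mathcal{I}'}(\mu(a))(r) = \{a' \mid (\mu(a),a') \in r^{\mathcal{I}'}\}$. The inclusion $\subseteq$ of the former in the latter is clause 2(a) ($\mu(r^\mathcal{I}) \subseteq r^{\mathcal{I}'}$), and the reverse inclusion is clause 2(b) (every $r^{\mathcal{I}'}$-successor $a'$ of $\mu(a)$ is $\mu(b)$ for some $r^\mathcal{I}$-successor $b$ of $a$). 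Hence the right square commutes and $G(\mu)$ is a bisimulation.

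For the converse, I would assume $G(\mu)$ is a bisimulation with witnessing structure $\alpha_R$, fix $a \in \Delta^\mathcal{I}$ and $r \in N_R$, and set $S = \alpha_R(a,\mu(a))(r) \subseteq G(\mu)$. The coloring condition immediately yields the first morphism clause as above. From the left and right squares I read off $\pi_\mathcal{I}(S) = \{b \mid (a,b) \in r^\mathcal{I}\}$ and $\pi_{\mathcal{I}'}(S) = \{a' \mid (\mu(a),a') \in r^{\mathcal{I}'}\}$. Because every element of $S$ has the form $(b,\mu(b))$, these two descriptions of the same set $S$ give both remaining clauses at once: if $(a,b) \in r^\mathcal{I}$ then $(b,\mu(b)) \in S$, so $\mu(b) \in \pi_{\mathcal{I}'}(S)$, i.e. $(\mu(a),\mu(b)) \in r^{\mathcal{I}'}$ (clause 2(a)); and if $(\mu(a),a') \in r^{\mathcal{I}'}$ then $a' = \mu(b)$ for some $(b,\mu(b)) \in S$, whence $(a,b) \in r^\mathcal{I}$ with $\mu(b) = a'$ (clause 2(b)). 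This completes the equivalence.
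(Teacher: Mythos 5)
Your proof is correct. The paper gives no proof of this theorem at all---it is imported from Rutten with a citation---and your argument is exactly the standard one behind that result (Rutten's characterization of coalgebra homomorphisms as functional bisimulations, adapted here to $F_\Sigma$-coalgebras with colorings): in the forward direction you supply the witnessing structure $\alpha_{G(\mu)}(a,\mu(a))(r)=\{(b,\mu(b))\mid (a,b)\in r^{\mathcal I}\}$ and correctly identify commutativity of the right square with the conjunction of clauses 2(a) (one inclusion) and 2(b) (the reverse inclusion), while in the converse you correctly exploit that $G(\mu)$ is a function graph, so every element of $S=\alpha_R(a,\mu(a))(r)$ has the form $(b,\mu(b))$ and the two projections of $S$ read off the two remaining morphism clauses.
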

Hence, morphisms are functional bisimulations. 

\medskip
It is well known that bisimulations preserve model behavior but what about GCIs? The following results answer this question.

\begin{proposition}
\label{preservation through morphisms}
Let  $\mu : \Delta^\mathcal{I} \to \Delta^{\mathcal{I}'}$ be a morphism. Then, for every $C \in \mathcal{C}(\Sigma)$, we have:
$$\forall a \in \Delta^\mathcal{I}, a \in C^\mathcal{I} \Longleftrightarrow \mu(a) \in C^{\mathcal{I}'}$$ 
\end{proposition}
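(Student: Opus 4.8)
The plan is to prove the equivalence by structural induction on the concept description $C$, following the inductive definition of $\mathcal{EC}(\Sigma)$ in Definition~\ref{concept expressions}. The fact that the claim is stated as a biconditional, holding uniformly for every $a \in \Delta^\mathcal{I}$, is the crucial feature that makes the induction go through: because we maintain an ``if and only if'' at each stage, the set-complementation operator $\_^c$ is absorbed for free, which is exactly the point at which a merely one-directional simulation statement would break down. The two clauses of Definition~\ref{morphism} are precisely what the inductive step will consume.

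First I would dispatch the base and Boolean cases. For $C = \top$ the equivalence is trivial since $\top^\mathcal{I} = \Delta^\mathcal{I}$ and $\mu$ is total, and for $C = \bot$ both sides are empty; for $C = c$ with $c \in N_C$, the equivalence is literally Clause~(1) of Definition~\ref{morphism}. For $C = C' \sqcup D'$ and $C = C' \sqcap D'$ the equivalence propagates immediately from the induction hypothesis applied to $C'$ and $D'$, since union and intersection commute with the biconditional. For $C = C'^c$ one argues that $\mu(a) \in (C'^c)^{\mathcal{I}'}$ iff $\mu(a) \notin C'^{\mathcal{I}'}$ iff, by the induction hypothesis, $a \notin C'^\mathcal{I}$ iff $a \in (C'^c)^\mathcal{I}$, so negation transports through the biconditional with no extra hypotheses.

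The substance lies in the two quantifier cases, where the two halves of Clause~(2) of Definition~\ref{morphism} enter. For $C = \exists r.C'$, the forward implication uses the ``forth'' condition~2(a): a witness $b$ with $(a,b) \in r^\mathcal{I}$ and $b \in C'^\mathcal{I}$ yields $(\mu(a),\mu(b)) \in r^{\mathcal{I}'}$ and, by the induction hypothesis, $\mu(b) \in C'^{\mathcal{I}'}$. The backward implication uses the ``back'' condition~2(b): from a witness $a'$ with $(\mu(a),a') \in r^{\mathcal{I}'}$ and $a' \in C'^{\mathcal{I}'}$ one obtains $b \in \Delta^\mathcal{I}$ with $(a,b) \in r^\mathcal{I}$ and $\mu(b) = a'$, whence the induction hypothesis gives $b \in C'^\mathcal{I}$. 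The case $C = \forall r.C'$ is exactly dual: the forward implication invokes the ``back'' condition~2(b) to pull an arbitrary $\mathcal{I}'$-successor of $\mu(a)$ back to a preimage, while the backward implication invokes the ``forth'' condition~2(a) to push an arbitrary $\mathcal{I}$-successor of $a$ forward, the induction hypothesis being applied in each case.

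The main obstacle---indeed the only place where the argument is not purely mechanical---is to pair each direction of the biconditional with the correct half of Clause~(2) in the quantifier cases. The delicate point is that the ``back'' condition~2(b), which supplies a preimage $b \in \Delta^\mathcal{I}$ for a given $\mathcal{I}'$-successor of $\mu(a)$, is exactly what is needed to transport an existential witness (for $\exists r.C'$) or a potential universal counterexample (for $\forall r.C'$) back along $\mu$; without it the backward direction for $\exists$ and the forward direction for $\forall$ would fail. Once the correct condition is selected in each subcase, the remainder is routine bookkeeping.
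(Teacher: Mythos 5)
Your proof is correct and follows essentially the same route as the paper's: structural induction on $C$, maintaining the biconditional throughout, with Clause~(1) of Definition~\ref{morphism} settling the base case and the ``forth''/``back'' halves of Clause~(2) paired with the appropriate directions in the quantifier cases. You are in fact slightly more thorough than the paper, which treats only $\sqcap$, $\_^c$, and $\exists r.\_$ explicitly and leaves $\sqcup$ and $\forall r.\_$ implicit, whereas you spell out the dual use of conditions~2(a) and~2(b) in the $\forall$ case---exactly the point the paper glosses over.
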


\begin{proof}
By structural induction over $C$. The basic case is obvious by definition of morphism. For the induction step, several cases have to be considered:

\begin{itemize}
\item $C$ is of the form $D \sqcap E$. Let $a \in D^\mathcal{I} \cap E^\mathcal{I}$. This means that both $a \in D^\mathcal{I}$ and $a \in E^\mathcal{I}$, and then by the induction hypothesis we also have that $\mu(a) \in D^{\mathcal{I}'} \cap E^{\mathcal{I}'}$. \\ Let $\mu(a) \in D^{\mathcal{I}'} \cap E^{\mathcal{I}'}$. By the induction hypothesis, we have both $a \in D^\mathcal{I}$ and $a \in E^\mathcal{I}$, and then $a \in D^\mathcal{I} \cap E^\mathcal{I}$.
\item $C$ is of the form $D^c$. This is a direct consequence of the induction hypothesis.
\item $C$ is of the form $\exists r. D$. Let $a \in (\exists r. D)^\mathcal{I}$. This means that there exists $a' \in D^\mathcal{I}$ such that $(a,a') \in r^\mathcal{I}$. By the induction hypothesis, we have that $\mu(a') \in D^{\mathcal{I}'}$. Moreover, as $\mu$ is a morphism, we have that $(\mu(a),\mu(a')) \in r^{\mathcal{I}'}$, and then $\mu(a) \in (\exists r. D)^{\mathcal{I}'}$. \\ Let $\mu(a) \in (\exists r. D)^{\mathcal{I}'}$. This means that there exists $b' \in D^{\mathcal{I}'}$ such that $(\mu(a),b') \in r^{\mathcal{I}'}$. As $\mu$ is a morphism, there exists $a' \in \Delta^\mathcal{I}$ such that $(a,a') \in r^\mathcal{I}$ and $\mu(a') = b'$. By the induction hypothesis, we have that $a' \in D^\mathcal{I}$, and then we can conclude that $a \in (\exists r. D)^\mathcal{I}$.
\end{itemize}
\end{proof}

Actually, we have that every morphism  $\mu : \Delta^\mathcal{I} \to \Delta^{\mathcal{I}'}$ preserves GCIs from $\mathcal{I}'$ to $\mathcal{I}$. 

\begin{corollary}
\label{corollary morphism}
If there exists a morphism $\mu : \Delta^\mathcal{I} \to \Delta^{\mathcal{I}'}$, then: 
$$\mathcal{I}' \models C \sqsubseteq D \Longrightarrow \mathcal{I} \models C \sqsubseteq D$$
\end{corollary}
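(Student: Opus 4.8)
The plan is to derive this directly from Proposition~\ref{preservation through morphisms}, which already does the essential work: it establishes the pointwise equivalence $a \in C^\mathcal{I} \Longleftrightarrow \mu(a) \in C^{\mathcal{I}'}$ for every concept $C$ and every individual $a \in \Delta^\mathcal{I}$. Since satisfaction of a GCI $C \sqsubseteq D$ amounts, by Definition~\ref{model satisfaction}, to the set inclusion $C^{\mathcal{I}'} \subseteq D^{\mathcal{I}'}$, the corollary should reduce to a short chasing argument moving back and forth across $\mu$.

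First I would assume $\mathcal{I}' \models C \sqsubseteq D$, i.e. $C^{\mathcal{I}'} \subseteq D^{\mathcal{I}'}$, and take an arbitrary $a \in C^\mathcal{I}$; the goal is to show $a \in D^\mathcal{I}$, which yields $C^\mathcal{I} \subseteq D^\mathcal{I}$ and hence $\mathcal{I} \models C \sqsubseteq D$. Applying the left-to-right direction of Proposition~\ref{preservation through morphisms} to the concept $C$ gives $\mu(a) \in C^{\mathcal{I}'}$. The assumed inclusion then yields $\mu(a) \in D^{\mathcal{I}'}$. Finally, applying the right-to-left direction of Proposition~\ref{preservation through morphisms} to the concept $D$ transports this back to $a \in D^\mathcal{I}$, closing the argument.

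I do not expect any genuine obstacle here: the corollary is a one-line consequence of the proposition, and the only thing to be careful about is that I use the equivalence in \emph{both} directions—left-to-right on $C$ to push $a$ forward along $\mu$, and right-to-left on $D$ to pull the image back. In particular, no injectivity or surjectivity hypothesis on $\mu$ is needed, since Proposition~\ref{preservation through morphisms} holds for arbitrary morphisms; this is precisely why GCIs are preserved from $\mathcal{I}'$ to $\mathcal{I}$ along the direction of the morphism.
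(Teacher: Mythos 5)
Your proposal is correct and follows exactly the same argument as the paper's own proof: assume $C^{\mathcal{I}'} \subseteq D^{\mathcal{I}'}$, take $a \in C^\mathcal{I}$, push forward via Proposition~\ref{preservation through morphisms} to get $\mu(a) \in C^{\mathcal{I}'} \subseteq D^{\mathcal{I}'}$, and pull back via the same proposition to conclude $a \in D^\mathcal{I}$. Your explicit remark that the equivalence is used in both directions (and that no injectivity or surjectivity of $\mu$ is needed) is a fair observation that the paper leaves implicit, but the substance is identical.
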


\begin{proof}
Let us suppose that $\mathcal{I}' \models C \sqsubseteq D$. Let $a \in C^\mathcal{I}$. By Proposition~\ref{preservation through morphisms}, we know that $\mu(a) \in C^{\mathcal{I}'}$, and then by the hypothesis, $\mu(a) \in D^{\mathcal{I}'}$, hence we can conclude by Proposition~\ref{preservation through morphisms} that $a \in D^\mathcal{I}$. 
\end{proof}

But the complete preservation of GCIs only holds for epimorphisms.

\begin{theorem}
\label{preservation through quotient}
If there exists an epimorphism $\mu : \Delta^\mathcal{I} \to \Delta^{\mathcal{I}'}$, then: 
$$\mathcal{I} \models C \sqsubseteq D \Longleftrightarrow \mathcal{I}' \models C \sqsubseteq D$$
\end{theorem}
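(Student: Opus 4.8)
The plan is to reduce the statement to the element-wise preservation property already in hand and to isolate exactly where surjectivity is needed. The implication $\mathcal{I}' \models C \sqsubseteq D \Rightarrow \mathcal{I} \models C \sqsubseteq D$ requires no new work at all: it holds for \emph{any} morphism $\mu : \Delta^\mathcal{I} \to \Delta^{\mathcal{I}'}$ by Corollary~\ref{corollary morphism}. So the entire content of the theorem lies in the converse direction $\mathcal{I} \models C \sqsubseteq D \Rightarrow \mathcal{I}' \models C \sqsubseteq D$, and this is precisely where the epimorphism (i.e. surjectivity) hypothesis is used.

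For that direction I would argue as follows. Assume $\mathcal{I} \models C \sqsubseteq D$, and fix an arbitrary $a' \in C^{\mathcal{I}'}$; the goal is to show $a' \in D^{\mathcal{I}'}$. Since $\mu$ is an epimorphism, it is surjective, so I may choose a preimage $a \in \Delta^\mathcal{I}$ with $\mu(a) = a'$. Now apply Proposition~\ref{preservation through morphisms} to $C$: from $\mu(a) = a' \in C^{\mathcal{I}'}$ and the equivalence $a \in C^\mathcal{I} \Longleftrightarrow \mu(a) \in C^{\mathcal{I}'}$, I obtain $a \in C^\mathcal{I}$. The hypothesis $\mathcal{I} \models C \sqsubseteq D$ then gives $a \in D^\mathcal{I}$, and a second application of Proposition~\ref{preservation through morphisms}, this time to $D$, yields $\mu(a) = a' \in D^{\mathcal{I}'}$. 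As $a'$ was an arbitrary element of $C^{\mathcal{I}'}$, this shows $C^{\mathcal{I}'} \subseteq D^{\mathcal{I}'}$, that is, $\mathcal{I}' \models C \sqsubseteq D$.

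There is essentially no hard step here: the whole theorem is the observation that surjectivity upgrades the one-directional reflection of GCIs from Corollary~\ref{corollary morphism} to a genuine two-directional equivalence. The one point that genuinely consumes the extra hypothesis is the choice of the preimage $a$ of $a'$; a non-surjective morphism could leave elements of $C^{\mathcal{I}'}$ unhit by $\mu$, and for those the hypothesis on $\mathcal{I}$ would say nothing. This is consistent with the coalgebraic reading developed earlier in the section, where an epimorphism is a quotient map and $\mathcal{I}'$ is a quotient of $\mathcal{I}$: quotients both preserve and reflect the satisfaction of GCIs. The only care I would take in writing it up is to make both uses of Proposition~\ref{preservation through morphisms} explicit, keeping track of which direction of the biconditional is invoked in each case.
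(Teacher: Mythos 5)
Your proof is correct and follows exactly the same route as the paper's: the backward direction is delegated to Corollary~\ref{corollary morphism}, and the forward direction picks a preimage of an arbitrary $a' \in C^{\mathcal{I}'}$ by surjectivity and applies Proposition~\ref{preservation through morphisms} twice, once for $C$ and once for $D$. Your additional remarks on where surjectivity is genuinely consumed are accurate but do not change the argument.
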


\begin{proof}
Let us suppose that $\mathcal{I} \models C \sqsubseteq D$. Let $b \in C^{\mathcal{I}'}$. As $\mu$ is an epimorphism, there exists $a \in \Delta^\mathcal{I}$ such that $\mu(a) = b$. By Proposition~\ref{preservation through morphisms}, we have that $a \in C^\mathcal{I}$, and then by hypothesis $a \in D^\mathcal{I}$, hence by Proposition~\ref{preservation through morphisms}, we can conclude that $b \in D^{\mathcal{I}'}$. \\ The opposite direction is Corollary~\ref{corollary morphism}. 
\end{proof}

\section{A finite basis theorem for the DL $\mathcal{ALC}$}
\label{A finite base theorem}

One of the oldest questions of universal algebra was whether or not the identities of a finite algebra of a finite signature $\Sigma$ could be derived from finitely many of the identities. In universal algebras, many theorems have been obtained to positively answer this question. Here, we show a result similar to Birkhoff's theorem  which states that for every finite algebra, such a finite set of identities exists under the condition  that a finite bound is placed on the number of variables~\cite{BS81}. Here, the result we obtain will not require any condition, variables being not considered in our context. 

\begin{definition}
Let $\Sigma$ be a signature. Let $\mathcal{C}$ be a class of $\Sigma$-models. Let us note $GCI(\mathcal{C}) = \{C \sqsubseteq D \mid \forall \mathcal{I} \in \mathcal{C}, \mathcal{I} \models C \sqsubseteq D\}$. We say that $GCI(\mathcal{C}) $ is {\bf finitely based} if there is a finite set $\mathcal{T}$ of GCIs which is:

\begin{itemize}
\item {\bf Sound for $\mathcal{C}$}, i.e.  $\mathcal{T} \subseteq GCI(\mathcal{C})$; 
\item {\bf Complete for $\mathcal{C}$}, i.e. $\mathcal{T} \models  GCI(\mathcal{C})$.
\end{itemize}
\end{definition}

\begin{theorem}[Finite basis for finite model]
\label{Birkhof theorem} 
Let $\Sigma$ be a finite signature. Let $\mathcal{I}$ be a $\Sigma$-model. Then, $GCI(\{\mathcal{I}\})$ is finitely based.
\end{theorem}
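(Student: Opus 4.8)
The plan is to exploit the fact that $\mathcal{I}$, being finite, is determined up to behaviour by finitely many individuals, and to let a single finite family of ``characteristic'' concepts name them. Write $\sim$ for bisimilarity on $\Delta^{\mathcal{I}}$. Since finite models are image-finite, bisimilarity has finite index and (by the Hennessy--Milner argument below) two individuals are bisimilar exactly when they satisfy the same concepts. The standard bisimulation quotient yields an epimorphism from $\mathcal{I}$ onto $\mathcal{I}/{\sim}$, so Theorem~\ref{preservation through quotient} gives $\mathcal{I} \models C \sqsubseteq D \Longleftrightarrow \mathcal{I}/{\sim} \models C \sqsubseteq D$ for all $C,D$; hence $GCI(\{\mathcal{I}\}) = GCI(\{\mathcal{I}/{\sim}\})$ and I may assume without loss of generality that $\mathcal{I}$ is reduced, i.e. distinct individuals are never bisimilar. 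Write $\Delta^{\mathcal{I}} = \{a_1, \ldots, a_k\}$.

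First I would prove the key lemma: in a reduced finite model each $a_i$ is definable, i.e. there is a concept $\chi_i \in \mathcal{EC}(\Sigma)$ with $\chi_i^{\mathcal{I}} = \{a_i\}$. For $i \neq j$ one has $a_i \not\sim a_j$, so there is a concept $C_{ij}$ separating them ($a_i \in C_{ij}^{\mathcal{I}}$ and $a_j \notin C_{ij}^{\mathcal{I}}$); this is the content of the Hennessy--Milner property, which I would establish by showing on a finite model that $\sim_n = \sim$ for the bound $n = |\Delta^{\mathcal{I}}|$ and that the bounded-depth characteristic concept of each individual is an actual finite concept over the finite signature $\Sigma$. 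Setting $\chi_i = \bigsqcap_{j \neq i} C_{ij}$ then gives a finite concept with $\chi_i^{\mathcal{I}} = \{a_i\}$. This lemma is where the real work lies and is the main obstacle: it is the step at which finiteness of both $\Sigma$ and $\Delta^{\mathcal{I}}$ is used, and it plays the role that bounding the number of variables plays in Birkhoff's theorem.

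I then define the candidate finite basis $\mathcal{T}$ to consist of all GCIs, valid in $\mathcal{I}$, of three shapes: the partition axioms $\top \equiv \bigsqcup_{i} \chi_i$ and $\chi_i \sqcap \chi_j \equiv \bot$ for $i \neq j$; the labelling axioms, namely $\chi_i \sqsubseteq c$ when $a_i \in c^{\mathcal{I}}$ and $\chi_i \sqsubseteq c^c$ otherwise, for each $c \in N_C$; and the transition axioms $\chi_i \sqsubseteq \exists r.\chi_j$ when $(a_i,a_j) \in r^{\mathcal{I}}$ and $\chi_i \sqsubseteq (\exists r.\chi_j)^c$ otherwise, for each $r \in N_R$. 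Since $N_C$, $N_R$ and $\{a_1,\ldots,a_k\}$ are finite, $\mathcal{T}$ is finite, and it is sound for $\mathcal{I}$ by construction.

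For completeness I would show that $\mathcal{T} \models (C \sqsubseteq D)$ for every $C \sqsubseteq D \in GCI(\{\mathcal{I}\})$. The core is a structural induction proving that, for every concept $C$ and every model $\mathcal{J} \models \mathcal{T}$, one has $\mathcal{J} \models C \equiv \bigsqcup_{i \in S_C} \chi_i$, where the index set $S_C = \{i : a_i \in C^{\mathcal{I}}\}$ is fixed in advance from $\mathcal{I}$. Each step uses only logical equivalences valid in every model --- distributivity, De Morgan, and in particular $\exists r.(A \sqcup B) \equiv \exists r.A \sqcup \exists r.B$ and $\forall r.A \equiv (\exists r.A^c)^c$ --- together with the axioms of $\mathcal{T}$: the labelling axioms give the base cases, the partition axioms handle $\sqcap$, $\sqcup$ and $\_^c$, and the transition axioms rewrite each $\exists r.\chi_j$ as $\bigsqcup_{i:(a_i,a_j)\in r^{\mathcal{I}}} \chi_i$. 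Finally, if $\mathcal{I} \models C \sqsubseteq D$ then $S_C \subseteq S_D$, so in any $\mathcal{J} \models \mathcal{T}$ the two derived equivalences give $C^{\mathcal{J}} = \bigcup_{i \in S_C}\chi_i^{\mathcal{J}} \subseteq \bigcup_{i \in S_D}\chi_i^{\mathcal{J}} = D^{\mathcal{J}}$, i.e. $\mathcal{J} \models C \sqsubseteq D$. Thus $\mathcal{T}$ is complete for $\{\mathcal{I}\}$, and $GCI(\{\mathcal{I}\})$ is finitely based.
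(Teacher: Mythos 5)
Your proposal is correct, but it takes a genuinely different route from the paper's own proof. The paper works entirely on the concept side: it introduces the equivalence $\Theta = \{C \equiv D \mid C^\mathcal{I} = D^\mathcal{I}\}$, observes that a finite model realizes at most $2^{|\Delta^\mathcal{I}|}$ extension classes, fixes one representative $C_i$ per class, and takes as basis the ``operation table'' on representatives ($c \equiv C_i$, $C_{i_1} \,@\, C_{i_2} \equiv C_{i_3}$, $C_{i_1}^c \equiv C_{i_2}$, $Qr.C_{i_1} \equiv C_{i_2}$) together with all inclusions $C_{i_1} \sqsubseteq C_{i_2}$ true in $\mathcal{I}$; completeness is then a structural induction rewriting any concept to its representative, with no bisimulation machinery and no definability of individuals. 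You instead work on the model side: you normalize $\mathcal{I}$ by the bisimulation quotient (legitimately invoking Theorem~\ref{preservation through quotient}, since the quotient map is an epimorphism in the sense of Definition~\ref{morphism} --- bisimilar points share their concept-name labels, and the largest bisimulation supplies the back condition), prove a Hennessy--Milner definability lemma producing characteristic concepts $\chi_i$ with $\chi_i^\mathcal{I} = \{a_i\}$, and axiomatize the diagram of the reduced model via partition, labelling and transition axioms. Your completeness induction, forcing $C^\mathcal{J} = \bigcup_{i \in S_C} \chi_i^\mathcal{J}$ in every $\mathcal{J} \models \mathcal{T}$, goes through as sketched: the $\sqcap$ case by distributivity plus the disjointness axioms, $\_^c$ by the covering axiom, $\exists r$ by $\exists r.(A \sqcup B) \equiv \exists r.A \sqcup \exists r.B$ together with the transition axioms (noting $S_{\exists r.D} = \{i \mid \exists j \in S_D, (a_i,a_j) \in r^\mathcal{I}\}$), and $\forall r$ by duality; just state explicitly the conventions $\bigsqcup_{\emptyset} = \bot$ and, for $k = 1$, $\bigsqcap_{\emptyset} = \top$. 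As to what each approach buys: the paper's argument is more elementary and self-contained, mirroring Birkhoff's classical proof, and requires nothing beyond finiteness of $\Sigma$ and $\mathcal{I}$ (which, note, both proofs use even though the theorem statement omits it); but its representatives $C_i$ are chosen non-constructively, which is exactly why the paper needs the separate Section~\ref{algorithm} with an extra separation hypothesis on $\mathcal{I}$ to compute them. Your route is heavier --- quotient plus a Hennessy--Milner argument --- but yields an explicit, computable basis whose number of axioms is polynomial in $|\Delta^\mathcal{I}|$, $|N_C|$, $|N_R|$ (against up to $2^{|\Delta^\mathcal{I}|}$ representatives and a cubic table in the paper), and it subsumes the algorithmic content of Section~\ref{algorithm} without that extra hypothesis, since in any finite model non-bisimilar points are separated by a depth-bounded characteristic concept once one passes to the reduced quotient.
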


\begin{proof}
Let $\Theta = \{C \equiv D \mid C^\mathcal{I} = D^\mathcal{I}\}$. By definition, $\Theta$ is an equivalence relation on $\mathcal{EC}(\Sigma)$. Moreover, as $\mathcal{I}$ is finite, there are only finitely many equivalence classes of $\Theta$ (at most $2^{\Delta^\mathcal{I}}$). For each equivalence class of $\Theta$, choose one concept. Let this set of representatives be $Q = \{C_1,\ldots,C_n\}$. Two kinds of GCIs will form the expected set $\mathcal{T}$. The first kind of GCIs consists of: $C_i,C_{i_1},C_{i_2},C_{i_3} \in Q$ with $C_{i_j} \neq C_{i_k}$ such that $j \neq k \in \{1,2,3\}$ and

\begin{itemize}
\item $c \equiv C_i$ if $c \in N_C$ and $c \equiv C_i \in \Theta$;  
\item $C_{i_1} @ C_{i_2} \equiv C_{i_3}$ if $C_{i_1} @ C_{i_2} \equiv C_{i_3} \in \Theta$ with $@ \in \{\sqcap,\sqcup\}$;
\item $C_{i_1}^c \equiv C_{i_2}$ if $C_{i_1}^c \equiv C_{i_2} \in \Theta$;
\item $Qr.C_{i_1} \equiv C_{i_2}$ if $r \in N_R$ and $Qr.C_{i_1} \equiv C_{i_2} \in \Theta$ with $Q \in \{\forall,\exists\}$.
\end{itemize}

The second kind of GCIs consists of: 
\begin{center}
$C_{i_1} \sqsubseteq C_{i_2}$ if $C^\mathcal{I}_{i_1} \subseteq C^\mathcal{I}_{i_2}$
\end{center}

with $C_{i_1},C_{i_2} \in Q$ and $C_{i_1} \neq C_{i_2}$.

\medskip
Soundness is obvious by construction. Then, let us show the completeness. First, let us show by structural induction over $C \in \mathcal{EC}(\Sigma)$ that if $C \equiv C_i \in \Theta$, then $\mathcal{T} \models C \equiv C_i$ for $C_i \in Q$. 

\begin{itemize}
\item {\em Basic case.} This is obvious by definition.
\item {\em General case.} Several cases have to be considered:

\begin{itemize}
\item Let $C = D_1 @ D_2$ with $@ \in \{\sqcap,\sqcup\}$. By construction, there exists $C_{i_1},C_{i_2} \in Q$ such that $D_1 \equiv C_{i_1},D_2 \equiv C_{i_2} \in \Theta$. By the induction hypothesis, we then have that $\mathcal{T} \models D_j \equiv C_{i_j}$ for $j = 1,2$. Here, two cases have to be considered:
\begin{enumerate}
\item $i_1 = i_2$. In this case, we have that $\mathcal{T} \models D_1 \equiv D_2$, and then $\mathcal{T} \models D_j \equiv C_i$ for $j = 1,2$. We can then conclude that $\mathcal{T} \models C \equiv C_i$. 
\item  $i_1 \neq i_2$. Hence, $C_{i_1} @ C_{i_2} \equiv C_i \in \Theta$, then $C_{i_1} @ C_{i_2} \equiv C_i \in \mathcal{T}$, and we can conclude that $\mathcal{T} \models C \equiv C_i$. 
\end{enumerate} 
\item Let $C = D^c$. By construction, there exists $C_{i_1} \in Q$ such that $D \equiv C_{i_1} \in \Theta$. By the induction hypothesis, we then have that $\mathcal{T} \models D \equiv C_{i_1}$, and then $\mathcal{T} \models D^c \equiv C_{i_1}^c$. Hence, $C_{i_1}^c \equiv C_i \in \Theta$, then $C_{i_1}^c \equiv C_i \in \mathcal{T}$, and we can conclude that $\mathcal{T} \models C \equiv C_i$.
\item Let $C = Qr.D$ with $Q \in \{\forall,\exists\}$. By construction, there exists $C_{i_1} \in Q$ such that $D \equiv C_{i_1} \in \Theta$. By induction hypothesis, we then have that $\mathcal{T} \models D \equiv C_{i_1}$. Hence, $Qr.D \equiv Qr.C_{i_1} \in \Theta$, then $Qr.C_{i_1} \equiv C_i \in \mathcal{T}$, and we can conclude that $\mathcal{T} \models C \equiv C_i$.
\end{itemize}
\end{itemize}
Hence, if $\mathcal{I} \models C \equiv D$, then $\mathcal{T} \models C \equiv D$. Let us suppose that $(C \sqsubseteq D) \in GCI(\{\mathcal{I}\})$ such that $C \equiv D \not\in \Theta$. By definition, there are $C_i,C_j \in Q$ such that $(C \equiv C_i),(D\equiv C_j) \in \Theta$, and then $\mathcal{T} \models C \equiv C_i$ and $\mathcal{T} \models D \equiv C_j$. By the hypothesis that $(C \sqsubseteq D) \in GCI(\{\mathcal{I}\})$ such that $(C \equiv D) \not\in \Theta$, we then have that $C_i \sqsubseteq C_j \in \mathcal{T}$. We can then conclude that $\mathcal{T} \models C \sqsubseteq D$. 
\end{proof}

The set of axioms $\mathcal{T}$ obtained by the algorithm given in the proof of Theorem~\ref{Birkhof theorem} is unlikely to be minimal, i.e. such that no strict subset of $\mathcal{T}$ is complete for $\mathcal{I}$ (see Example~\ref{example of finite basis}). In fact, we may generate a lot of tautologies and GCIs that can be inferred from others. Therefore, an elimination step is still needed to keep only the GCIS which have an axiom status (i.e. GCIs which are not tautologies and cannot be inferred from others). This tedious work can be automated, reasoning in the logic $\mathcal{ALC}$ being computable. 


\section{An algorithm for computing a finite basis for the DL $\mathcal{ALC}$}
\label{algorithm}

With a simple condition on the model $\mathcal{I}$, we can effectively define for each subset $S \in 2^{\Delta^\mathcal{I}}$ a representative of the equivalence class $\Gamma \in \Theta$ such that for every $C \in \Gamma$, $C^\mathcal{I} = S$. For this, we need the following notations:

\begin{itemize}
\item Let $r \in N_R$ be a relation name. Let us note $r^\mathcal{I}_1 = \{a \in \Delta^\mathcal{I} \mid \exists b \in \Delta^\mathcal{I}, (a,b) \in r^\mathcal{I}\}$.
\item Let $a \in \Delta^\mathcal{I}$. Let us note $N_C^\mathcal{I}(a) = \{c \in N_C \mid a \in c^\mathcal{I}\}$. 
\end{itemize}

For our algorithm, let us suppose that the following condition is satisfied in $\mathcal{I}$: $\forall a,b \in \Delta^\mathcal{I}$,
$$N_C^\mathcal{I}(a) = N_C^\mathcal{I}(b) \Longrightarrow \exists r \in N_R, a \in r^\mathcal{I}_1 \mbox{ and } b \not\in  r^\mathcal{I}_1$$

\begin{exemple}
The model $\mathcal{I}$ of Example~\ref{exemple model} satisfies this condition. 
\end{exemple}

When dealing with finite models, this condition can be effectively checked.

\medskip
Let us define recursively on the cardinality of the subsets $S \in 2^{\Delta^\mathcal{I}}$ the concept $C_S$ such that $C_S^\mathcal{I} = S$ as follows:

\begin{itemize}
\item For $S = \emptyset$, $C_S = \bot$;
\item For $S = \{a\}$, we apply the following sequence of actions:

\begin{enumerate}
\item if $N_C^\mathcal{I}(a) \neq \emptyset$, then let us choose $c \in N_C^\mathcal{I}(a)$ and let us set $C_a = c$. Otherwise, let us choose any $c \in N_C$ and let us set $C_a = c^c$;
\item for every $b \neq a \in \Delta^\mathcal{I}$, let us define $C_{b/a}$ as follows: 
\begin{itemize}
\item if $N_C^\mathcal{I}(a) \neq N_C^\mathcal{I}(b)$, then let us choose $c \in N_C^\mathcal{I}(b) \setminus N_C^\mathcal{I}(a)$ and let us set $C_{b/a} = c$;
\item otherwise (i.e. $N_C^\mathcal{I}(a) = N_C^\mathcal{I}(b)$), let us choose $r \in N_R$ such that $b \in r^\mathcal{I}_1$ and $a \not\in r^\mathcal{I}_1$ and let us set $C_{b/a} = \exists r. \top$.
\end{itemize}
\end{enumerate}
And then let us set $C_{\{a\}} = C_a \sqcap \bigsqcap_{b \neq a \in C^\mathcal{I}_a}C^c_{b/a}$.
\item For $S = \{a_1,\ldots,a_n\}$, $C_S = \bigsqcup_{a_i \in S} C_{\{a_i\}}$.
\end{itemize}
Hence, we have a representative for each equivalence class, and then by applying the rules given in the proof of Theorem~\ref{Birkhof theorem}, we can generate automatically a complete finite base for the model $\mathcal{I}$ under consideration.  

\begin{exemple}
\label{example of finite basis}
For the model $\mathcal{I}$ of Example~\ref{exemple model}, the algorithm can yield for the sets $\emptyset$, $\{Homer\}$, $\{Marge\}$, and $\{Homer,Marge\}$, the basic concepts $\bot$, $Male$, $Wife$ and $Male \sqcup Wife$. According to the algorithm given in the proof of Theorem~\ref{Birkhof theorem}, this gives rise to the following sets of GCIs:

\begin{center}
$Husband \equiv Male$ \\
$Female \equiv Wife$ \\
$Male \sqcap Wife \equiv \bot$ \\
$Male \sqcap (Male \sqcup Wife) \equiv Male$ \\
$Wife \sqcap (Male \sqcup Wife) \equiv Wife$ \\
$Male \sqcup (Male \sqcup Wife) \equiv Male \sqcup Wife$ \\
$Wife \sqcup (Male \sqcup Wife) \equiv Male \sqcup Wife$ \\
$\bot \sqcup C \equiv C$ with $C \in \{Male,Wife,Male \sqcup Wife\}$ \\
$\bot \sqcap C \equiv \bot$ with $C \in \{Male,Wife,Male \sqcup Wife\}$ \\
$\bot^c = Male \sqcup Wife$ \\
$Male^c \equiv Wife$ \\
$Wife^c \equiv Male$ \\
$(Male \sqcup Wife)^c \equiv \bot$ \\
$@ marriedTo. Wife \equiv Male$ \\with $@ \in \{\exists,\forall\}$ \\
$@ marriedTo. Male \equiv Wife$ \\with $@ \in \{\exists,\forall\}$ \\
$@ marriedTo.C \equiv C$ with $@ \in \{\exists,\forall\}$ and $C \in \{\bot,Male \sqcup Wife\}$ \\
$\bot \sqsubseteq C$ with $C \in \{Male,Wife,Male \sqcup Wife\}$ \\
$Male \sqsubseteq Male \sqcup Wife$ \\
$Wife \sqsubseteq Male \sqcup Wife$
\end{center} 
As noted above, this set of axioms is not minimal. An elimination step has then to be performed. This gives rise to the following set:

\begin{center}
$Husband \equiv Male$ \\
$Female \equiv Wife$ \\
$Male \sqcap Wife \equiv \bot$ \\
$Male^c \equiv Wife$ \\
$(Male \sqcup Wife)^c \equiv \bot$ \\
$@ marriedTo. Wife \equiv Male$ \\with $@ \in \{\exists,\forall\}$ \\
$@ marriedTo. Male \equiv Wife$ \\with $@ \in \{\exists,\forall\}$
\end{center} 
\end{exemple}

\begin{theorem}[Correctness]
Let $\mathcal{I}$ be a finite $\Sigma$-model. Then, for every subset $S \subseteq \Delta^\mathcal{I}$, the concept $C_S$ calculated by the procedure above satisfies $C^\mathcal{I}_S = S$. 
\end{theorem}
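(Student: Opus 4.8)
The plan is to induct on the cardinality $|S|$, mirroring the three clauses of the recursive definition. For $S=\emptyset$ the claim is immediate, since $C_\emptyset=\bot$ and $\bot^\mathcal{I}=\emptyset$. For $|S|\geq 2$ the claim reduces at once to the singleton case: by definition $C_S=\bigsqcup_{a_i\in S}C_{\{a_i\}}$, so $C_S^\mathcal{I}=\bigcup_{a_i\in S}C_{\{a_i\}}^\mathcal{I}$, and once we know $C_{\{a_i\}}^\mathcal{I}=\{a_i\}$ for each $a_i$ (the cardinality-one case) this union is exactly $S$. Thus the entire content of the theorem is concentrated in the singleton case $S=\{a\}$, which I treat next.

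For $S=\{a\}$ I would first record that $a\in C_a^\mathcal{I}$ under either branch defining $C_a$: if $N_C^\mathcal{I}(a)\neq\emptyset$ then $C_a=c$ with $c\in N_C^\mathcal{I}(a)$, so $a\in c^\mathcal{I}$; if $N_C^\mathcal{I}(a)=\emptyset$ then $C_a=c^c$ with $a\notin c^\mathcal{I}$, so $a\in(c^c)^\mathcal{I}=\Delta^\mathcal{I}\setminus c^\mathcal{I}$. The key step is the \emph{separation property}: for each $b\neq a$ with $b\in C_a^\mathcal{I}$, the auxiliary concept $C_{b/a}$ satisfies $b\in C_{b/a}^\mathcal{I}$ and $a\notin C_{b/a}^\mathcal{I}$. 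In the branch $N_C^\mathcal{I}(a)\neq N_C^\mathcal{I}(b)$ the concept is a chosen name $c$, and membership follows directly from $c\in N_C^\mathcal{I}(b)$ and $c\notin N_C^\mathcal{I}(a)$. In the branch $N_C^\mathcal{I}(a)=N_C^\mathcal{I}(b)$ the concept is $\exists r.\top$ for a role $r$ with $b\in r_1^\mathcal{I}$ and $a\notin r_1^\mathcal{I}$; unwinding Definition~\ref{concept expression evaluation}, a point lies in $(\exists r.\top)^\mathcal{I}$ iff it has an $r$-successor, so $b\in(\exists r.\top)^\mathcal{I}$ and $a\notin(\exists r.\top)^\mathcal{I}$, as required.

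Granting the separation property, I evaluate $C_{\{a\}}=C_a\sqcap\bigsqcap_{b\neq a\in C_a^\mathcal{I}}C_{b/a}^c$ as $C_a^\mathcal{I}\cap\bigcap_{b}(\Delta^\mathcal{I}\setminus C_{b/a}^\mathcal{I})$. The element $a$ belongs to this set because $a\in C_a^\mathcal{I}$ and $a\notin C_{b/a}^\mathcal{I}$ for every index $b$; conversely, any $x\in C_a^\mathcal{I}$ with $x\neq a$ is itself one of the indices, and its own conjunct $C_{x/a}^c$ excludes it since $x\in C_{x/a}^\mathcal{I}$. Hence $C_{\{a\}}^\mathcal{I}=\{a\}$, provided the two choices made inside the construction are always available; this is the one point requiring the hypotheses on $\mathcal{I}$ and $\Sigma$, which I address now.

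I expect the main obstacle to be exactly this availability. For the role branch, the concept needs a role oriented as $b\in r_1^\mathcal{I}$ and $a\notin r_1^\mathcal{I}$, whereas the structural hypothesis on $\mathcal{I}$ is written with $a$ and $b$ in the opposite positions; the resolution is that the hypothesis is universally quantified over ordered pairs, so applying it to $(b,a)$ (legitimate because $N_C^\mathcal{I}(b)=N_C^\mathcal{I}(a)$) delivers precisely the required orientation, and this is exactly the role played by the condition imposed on $\mathcal{I}$. For the concept-name branch one must check that the chosen name genuinely lies in $N_C^\mathcal{I}(b)\setminus N_C^\mathcal{I}(a)$; this holds whenever that difference is nonempty, and in the remaining possibility $N_C^\mathcal{I}(b)\subsetneq N_C^\mathcal{I}(a)$ one separates instead by the complement $c^c$ of a name $c\in N_C^\mathcal{I}(a)\setminus N_C^\mathcal{I}(b)$, which again yields $b\in(c^c)^\mathcal{I}$ and $a\notin(c^c)^\mathcal{I}$. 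Finiteness of $\Sigma$ and of $\Delta^\mathcal{I}$ guarantees both searches terminate, completing the singleton case and hence, via the reduction above, the proof.
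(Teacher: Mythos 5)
Your proof is correct and, in outline, the same as the paper's: both reduce everything to the singleton case via $C_\emptyset=\bot$ and $C_S^\mathcal{I}=\bigcup_{a_i\in S}C_{\{a_i\}}^\mathcal{I}$, and both rest on the separation property that $a$ survives every conjunct of $C_{\{a\}}=C_a\sqcap\bigsqcap_{b\neq a\in C_a^\mathcal{I}}C_{b/a}^c$ while each $b\neq a$ in $C_a^\mathcal{I}$ is excluded by its own conjunct $C_{b/a}^c$, with finiteness of $\Delta^\mathcal{I}$ ensuring the construction terminates. Where you genuinely diverge---and improve on the paper---is in the concept-name branch. The paper's procedure and proof simply ``choose $c\in N_C^\mathcal{I}(b)\setminus N_C^\mathcal{I}(a)$'' whenever $N_C^\mathcal{I}(a)\neq N_C^\mathcal{I}(b)$, overlooking the subcase $N_C^\mathcal{I}(b)\subsetneq N_C^\mathcal{I}(a)$, where that difference is empty yet $b$ can indeed lie in $C_a^\mathcal{I}$ (take $N_C=\{c,d\}$ with $a\in c^\mathcal{I}\cap d^\mathcal{I}$, $b\in c^\mathcal{I}\setminus d^\mathcal{I}$, and $C_a=c$); as literally written the procedure then has no legal choice of $C_{b/a}$. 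Your patch---separate instead with $c^c$ for some $c\in N_C^\mathcal{I}(a)\setminus N_C^\mathcal{I}(b)$, so that the conjunct $C_{b/a}^c$ evaluates to $c^\mathcal{I}$, which contains $a$ and excludes $b$---is exactly what is needed and preserves the rest of the argument unchanged. Your second point of care, the orientation of the separating-role hypothesis, is also well taken: the procedure needs $b\in r_1^\mathcal{I}$ and $a\notin r_1^\mathcal{I}$, the displayed condition is stated with the opposite orientation, and the paper's proof silently invokes it in your ordered-pair form (applied to $(b,a)$); note in passing that the condition as printed also lacks the guard $a\neq b$, without which it is unsatisfiable, so your reading is the only sensible one. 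In short: same strategy, but your version certifies correctness of a slightly amended procedure in all cases, whereas the paper's proof has a small hole in the name branch.
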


\begin{proof}
The only difficulty of the proof is when $S$ is a singleton, say $\{a\}$. The rest of the proof is straightforward. Let us then show that $C^\mathcal{I}_{\{a\}} = \{a\}$. By Point (i) of the procedure, we start by defining a first concept $C_a$ which is either a basic concept $c$ or the set difference of a basic concept $c^c$ depending on whether $N^\mathcal{I}_C(a)$ is non-empty or not. By definition of $C_a$, we obviously have that $a \in C^\mathcal{I}_a$. Let $b \neq a \in C^\mathcal{I}_a$. Here, two cases have to be considered:
\begin{enumerate}
\item $N^\mathcal{I}_C(a) \neq N^\mathcal{I}_C(b)$. The procedure then sets $C_{b/a} = c$. By definition of $C_{b/a}$, we have both $a \not\in C^\mathcal{I}_{b/a}$ and $b \in C^\mathcal{I}_{b/a}$. Hence, we have that $b \not\in (C_a \sqcap C^c_{b/a})^\mathcal{I}$ and $a \in (C_a \sqcap C^c_{b/a})^\mathcal{I}$.
\item $N^\mathcal{I}_C(a) = N^\mathcal{I}_C(b)$. Let $r$ be a relation in $N_R$ such that $b \in r^\mathcal{I}_1$ but $a \not\in r^\mathcal{I}_1$. Such a relation $r$ exists by hypothesis. The procedure sets $C_{b/a} = \exists r. \top$, and then we have $b \in C^\mathcal{I}_{b/a}$ and $a \not\in C^\mathcal{I}_{b/a}$, hence we conclude that $b \not\in (C_a \sqcap  C^c_{b/a})^\mathcal{I}$. 
\end{enumerate} 
As $(C_a \sqcap  C^c_{b/a})^\mathcal{I} \subseteq C^\mathcal{I}_a$, the procedure at each step (ii) removes an element $b \neq a$ of $C_a$. As $C^\mathcal{I}_a$ is finite, in a finite number of steps, the procedure generates the concept $C_{\{a\}}$ which satisfies by construction $C^\mathcal{I}_{\{a\}} = \{a\}$.
\end{proof}

\section{Extension to complete covarieties}
\label{basic constructions}

This section deals with the extension of the fundamental theorem on the existence of a finite basis theorem for the DL $\mathcal{ALC}$ to complete covarieties.
Complete covarieties are classes of $\Sigma$-models which are closed under morphism domain, quotients (homomorphic images) and coproducts.

\begin{definition}[Homomorphic image]
Let $\mathcal{I}$ and $\mathcal{I}'$ be two $\Sigma$-models. $\mathcal{I}'$ is a {\bf homomorphic image} of $\mathcal{I}$ if there exists an epimorphism $\mu :\Delta^\mathcal{I} \to \Delta^{\mathcal{I}'}$.  
\end{definition}


\begin{definition}[Coproduct]
Let $(\mathcal{I}_i)_{i \in \Lambda}$ be a family of $\Sigma$-models indexed by a set $\Lambda$. Let $\sum_{i \in \Lambda} \mathcal{I}_i$ denote the $\Sigma$-model $\mathcal{I}'$ defined by:

\begin{itemize}
\item $\Delta^{\mathcal{I}'} = \{(i,a) \mid i \in \Lambda, a \in \Delta^{\mathcal{I}_i}\}$;
\item $\forall c \in N_C, c^{\mathcal{I}'} = \bigcup_{i \in \Lambda} \{(i,a) \mid a \in c^{\mathcal{I}_i}\}$;
\item $\forall r \in N_R, r^{\mathcal{I}'} = \bigcup_{i \in \Lambda} \{((i,a),(i,b)) \mid (a,b) \in r^{\mathcal{I}_i}\}$.
\end{itemize} 
$\sum_{i \in \Lambda} \mathcal{I}_i$ is called {\bf coproduct} of  $(\mathcal{I}_i)_{i \in \Lambda}$. 
\end{definition}

Complete covarieties are trivially covarieties because morphism domains contain embeddings, that is, given a model $\mathcal{I}'$, all the models $\mathcal{I}$ such that there exists a monomorphism $\mu : \Delta^\mathcal{I} \to \Delta^{\mathcal{I}'}$. 

\medskip
We saw in Section~\ref{morphism and bisimulation} how GCIs are preserved through morphisms and epimorphisms, and then through morphism domain and homomorphic image closures. Here, we show how GCIs are preserved through coproduct closure.

\begin{proposition}
\label{preservation coproduct}
Let $\mathcal{I}'$ be the coproduct of $(\mathcal{I}_i)_{i \in \Lambda}$. Then, for every $C \in \mathcal{C}(\Sigma)$, we have that $C^{\mathcal{I}'} = \bigcup_{i \in \Lambda} \{i\} \times C^{\mathcal{I}_i}$. 
\end{proposition}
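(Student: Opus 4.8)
The plan is to prove the identity $C^{\mathcal{I}'} = \bigcup_{i \in \Lambda} \{i\} \times C^{\mathcal{I}_i}$ by structural induction on the concept description $C \in \mathcal{EC}(\Sigma)$, exactly mirroring the inductive definition of concept evaluation (Definition~\ref{concept expression evaluation}). The whole argument rests on the fact that the coproduct carrier $\Delta^{\mathcal{I}'}$ is the disjoint union $\{(i,a) \mid i \in \Lambda,\ a \in \Delta^{\mathcal{I}_i}\}$, and crucially that the relation $r^{\mathcal{I}'}$ contains \emph{only} pairs of the form $((i,a),(i,b))$ living inside a single component $i$: there are no cross-component edges. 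This is what makes the quantifier cases go through.

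First I would dispatch the base cases. For $C = c$ with $c \in N_C$, the claim is literally the definition of $c^{\mathcal{I}'}$ in the coproduct; for $C = \top$ one uses $\Delta^{\mathcal{I}'} = \bigcup_i \{i\} \times \Delta^{\mathcal{I}_i}$ and $\top^{\mathcal{I}_i} = \Delta^{\mathcal{I}_i}$; and $C = \bot$ is immediate since both sides are empty. For the Boolean connectives $\sqcup$ and $\sqcap$, the result follows by pushing the union over the index set through the set operations, using that the sets $\{i\} \times \Delta^{\mathcal{I}_i}$ are pairwise disjoint (so union and intersection distribute correctly across components). The complement case $C = D^c$ is where the disjointness is used more carefully: $(D^c)^{\mathcal{I}'} = \Delta^{\mathcal{I}'} \setminus D^{\mathcal{I}'}$, and by the induction hypothesis $D^{\mathcal{I}'} = \bigcup_i \{i\} \times D^{\mathcal{I}_i}$, so taking complements component by component inside each slice $\{i\} \times \Delta^{\mathcal{I}_i}$ yields $\bigcup_i \{i\} \times (\Delta^{\mathcal{I}_i} \setminus D^{\mathcal{I}_i}) = \bigcup_i \{i\} \times (D^c)^{\mathcal{I}_i}$.

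The quantifier cases $\exists r.D$ and $\forall r.D$ are the heart of the argument, and I expect them to be the main obstacle — not because they are hard, but because they are where the specific structure of $r^{\mathcal{I}'}$ must be invoked. For $\exists r.D$: an element $(i,a) \in (\exists r.D)^{\mathcal{I}'}$ iff there is some $(j,b) \in D^{\mathcal{I}'}$ with $((i,a),(j,b)) \in r^{\mathcal{I}'}$. Since every edge in $r^{\mathcal{I}'}$ stays within one component, the witness $(j,b)$ must have $j = i$, and then $(a,b) \in r^{\mathcal{I}_i}$ with $b \in D^{\mathcal{I}_i}$ by the induction hypothesis; this is precisely $a \in (\exists r.D)^{\mathcal{I}_i}$. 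The $\forall r.D$ case is dual: because all $r^{\mathcal{I}'}$-successors of $(i,a)$ lie in component $i$, the universal condition at $(i,a)$ in $\mathcal{I}'$ is equivalent to the universal condition at $a$ in $\mathcal{I}_i$. The one point I would state explicitly is that no $(i,a)$ has any successor outside its own component, so elements with \emph{no} successors at all (in their component) vacuously satisfy $\forall r.D$ in both $\mathcal{I}'$ and $\mathcal{I}_i$ consistently; overlooking cross-component edges is the only way the proof could go wrong, so I would flag that the definition of $r^{\mathcal{I}'}$ forbids them.
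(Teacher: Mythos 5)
Your proof is correct and follows essentially the same route as the paper: structural induction on $C$, with the pairwise disjointness of the slices $\{i\}\times\Delta^{\mathcal{I}_i}$ handling the Boolean cases and the absence of cross-component edges in $r^{\mathcal{I}'}$ handling the quantifier cases. The only (cosmetic) difference is that you treat $\forall r.D$ and $\sqcup$ directly, whereas the paper derives them from the $\exists r.D$, $\sqcap$, and complement cases via the dualities $(\forall r.D)^c = \exists r.D^c$ and $(C \sqcap D)^c = C^c \sqcup D^c$.
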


\begin{proof}
By induction on $C$. The basic case is obvious by definition. For the induction step, several cases have to be considered:

\begin{itemize}
\item $C$ is of the form $D \sqcap E$. By the induction hypothesis, we have that $D^{\mathcal{I}'} = \bigcup_{i \in \Lambda} \{i\} \times D^{\mathcal{I}_i}$ and $E^{\mathcal{I}'} = \bigcup_{i \in \Lambda} \{i\} \times E^{\mathcal{I}_i}$. By developing $\bigcup_{i \in \Lambda} \{i\} \times D^{\mathcal{I}_i} \cap \bigcup_{i \in \Lambda} \{i\} \times E^{\mathcal{I}_i}$, we obtain $\bigcup_{i \in \Lambda} \bigcup_{j \in \Lambda} ( \{i\} \times D^{\mathcal{I}_i} \cap \{j\} \times D^{\mathcal{I}_j})$. By definition of coproduct, for every $i,j$ such that $i \neq j$, we have that $\{i\} \times D^{\mathcal{I}_i} \cap \{j\} \times E^{\mathcal{I}_j} = \emptyset$, and then we can conclude that $C ^{\mathcal{I}'} = \bigcup_{i \in \Lambda} (\{i\} \times D^{\mathcal{I}_i} \cap \{i\} \times E^{\mathcal{I}_i})$. 
\item $C$ is of the form $D^c$. By the induction hypothesis, we have that $D^{\mathcal{I}'} = \bigcup_{i \in \Lambda} \{i\} \times D^{\mathcal{I}_i}$. Let us show that $\Delta^{\mathcal{I}'} \setminus D^{\mathcal{I}'} = \bigcup_{i \in \Lambda} \{i\} \times \Delta^{\mathcal{I}_i} \setminus D^{\mathcal{I}_i}$. Let $(i,a) \in \Delta^{\mathcal{I}'} \setminus D^{\mathcal{I}'}$. By the induction hypothesis, this means that $(i,a) \not\in D^{\mathcal{I}'}$, and then by the definition of coproduct, $(i,a) \not\in \{i\} \times \Delta^{\mathcal{I}_i} \setminus D^{\mathcal{I}_i}$. \\
Let $(i,a) \in \bigcup_{i \in \Lambda} \{i\} \times \Delta^{\mathcal{I}_i} \setminus D^{\mathcal{I}_i}$. This means that $(i,a) \not\in \{i\} \times D^{\mathcal{I}_i}$, and then by the definition of coproduct $(i,a) \in \bigcup_{i \in \Lambda} \{i\} \times \Delta^{\mathcal{I}_i} \setminus D^{\mathcal{I}_i}$. By the induction hypothesis, we have then $(i,a) \not\in D^{\mathcal{I}'}$, hence we can conclude that $(i,a) \in \Delta^{\mathcal{I}'} \setminus D^{\mathcal{I}'}$. 
\item $C$ is of the form $\exists r. D$. By the induction hypothesis, we have that $D^{\mathcal{I}'} = \bigcup_{i \in \Lambda} \{i\} \times D^{\mathcal{I}_i}$. By the definition of coproduct, we have that:
$$\begin{array}{ll}
\exists r.D^{\mathcal{I}'} & = \{(i,a) \mid \exists b \in D^{\mathcal{I}_i}, (a,b) \in r^{\mathcal{I}_i}\} \\
                                      & = \bigcup_{i \in \Lambda} \{i\} \times \{a \mid \exists b \in D^{\mathcal{I}_i},(a,b) \in r^{\mathcal{I}_i}\} \\
                                      & = \bigcup_{i \in \Lambda} \{i\} \times (\exists r.D)^{\mathcal{I}_i}
\end{array}$$
\item The cases where $C$ is of the form $\forall r.D$ or $D\sqcup E$ are directly derived from the previous ones since $(\forall r. D)^c = \exists r. D^c$ and $(C \sqcap D)^c = C^c \sqcup D^c$.
\end{itemize} 
\end{proof}

\begin{corollary}
\label{corollary coproduct}
Let $\mathcal{I}'$ be the coproduct of $(\mathcal{I}_i)_{i \in \Lambda}$. Then: $(\forall i \in \Lambda, \mathcal{I}_i \models C \sqsubseteq D) \Longleftrightarrow \mathcal{I}' \models C \sqsubseteq D$.
\end{corollary}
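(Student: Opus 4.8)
The plan is to reduce the biconditional to the pointwise characterization of concept evaluation in a coproduct that was just established in Proposition~\ref{preservation coproduct}. The key fact to exploit is that $C^{\mathcal{I}'} = \bigcup_{i \in \Lambda} \{i\} \times C^{\mathcal{I}_i}$ for every concept $C$, and likewise for $D$. Since the carrier of the coproduct is the \emph{disjoint} union $\Delta^{\mathcal{I}'} = \bigcup_{i \in \Lambda} \{i\} \times \Delta^{\mathcal{I}_i}$, each element of $\Delta^{\mathcal{I}'}$ lives in exactly one component, and membership in $C^{\mathcal{I}'}$ is controlled entirely by membership in the corresponding $C^{\mathcal{I}_i}$. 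This is what makes both directions go through cleanly.

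For the direction from right to left, I would assume $\mathcal{I}_i \models C \sqsubseteq D$ for every $i \in \Lambda$, i.e. $C^{\mathcal{I}_i} \subseteq D^{\mathcal{I}_i}$ for all $i$. Taking the componentwise product with $\{i\}$ preserves inclusion, so $\{i\} \times C^{\mathcal{I}_i} \subseteq \{i\} \times D^{\mathcal{I}_i}$; taking the union over $i \in \Lambda$ then yields $\bigcup_{i \in \Lambda} \{i\} \times C^{\mathcal{I}_i} \subseteq \bigcup_{i \in \Lambda} \{i\} \times D^{\mathcal{I}_i}$, which by Proposition~\ref{preservation coproduct} is precisely $C^{\mathcal{I}'} \subseteq D^{\mathcal{I}'}$, i.e. $\mathcal{I}' \models C \sqsubseteq D$.

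For the converse, I would assume $\mathcal{I}' \models C \sqsubseteq D$, i.e. $C^{\mathcal{I}'} \subseteq D^{\mathcal{I}'}$, and fix an arbitrary index $i \in \Lambda$; the goal is $C^{\mathcal{I}_i} \subseteq D^{\mathcal{I}_i}$. Let $a \in C^{\mathcal{I}_i}$. Then $(i,a) \in \{i\} \times C^{\mathcal{I}_i} \subseteq C^{\mathcal{I}'}$ by Proposition~\ref{preservation coproduct}, so by the hypothesis $(i,a) \in D^{\mathcal{I}'}$. Applying Proposition~\ref{preservation coproduct} again, $(i,a) \in \bigcup_{j \in \Lambda} \{j\} \times D^{\mathcal{I}_j}$; since the union is disjoint and $(i,a)$ has first coordinate $i$, the only component it can belong to is $\{i\} \times D^{\mathcal{I}_i}$, forcing $a \in D^{\mathcal{I}_i}$. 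As $a$ was arbitrary this gives $C^{\mathcal{I}_i} \subseteq D^{\mathcal{I}_i}$, and as $i$ was arbitrary we obtain $\mathcal{I}_i \models C \sqsubseteq D$ for all $i$.

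There is no real obstacle here: the entire content sits in Proposition~\ref{preservation coproduct}, and what remains is the routine bookkeeping of set inclusions under disjoint unions. The one point that deserves explicit mention rather than silent use is the disjointness of the components in the converse direction --- that is exactly what rules out $(i,a)$ belonging to some $\{j\} \times D^{\mathcal{I}_j}$ with $j \neq i$ and lets me descend from the coproduct back to the single model $\mathcal{I}_i$.
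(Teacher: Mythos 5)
Your proof is correct and follows exactly the route the paper intends: the paper's own proof consists of the single remark that the corollary is ``direct from Proposition~\ref{preservation coproduct}'', and your argument is just the careful unfolding of that remark, with the disjointness of the components rightly flagged as the only point needing attention in the converse direction.
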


\begin{proof}
The result is direct from Proposition~\ref{preservation coproduct}. 
\end{proof}

\begin{definition}[Complete covariety]
A class of $\Sigma$-models $\mathcal{C}v$ is a {\bf complete covariety} if it is closed under morphism domain, homomorphic images, and coproducts. \\ Let $\mathcal{K}$ be a class of $\Sigma$-models. Let $\mathcal{C}v(\mathcal{K})$ denote the complete covariety generated by $\mathcal{K}$. \\ A complete covariety $\mathcal{C}v$ is {\bf finitely generated} if $\mathcal{C}v = \mathcal{C}v(\mathcal{K})$ for some finite set of finite $\Sigma$-models $\mathcal{K}$. 
\end{definition}

Let us now introduce the notions of final, weakly final, and behavioral models to prove,
when there exists a final model $\mathbb{T}$, 
an extension for complete covarieties of Theorem 15.3 given by J. Rutten in~\cite{RuttUni}, that states that any complete covariety is determined by a submodel of $\mathbb{T}$. 


\begin{definition}[Final and weakly final models]
A model is said to be \textbf{final} if for every $\Sigma$-model $\mathcal{I}$ there is a unique morphism  $\mu_\mathcal{I} : \Delta^\mathcal{I} \to \Delta^\mathbb{T}$ (i.e. there exists a unique morphism from any model to it), and \textbf{weakly final} if the morphism is not unique.
\end{definition}

\begin{definition}[Behavioral model]
Let $\Sigma$ be a signature. Let us note $N_R^\infty$ the set of infinite and finite words on $N_R$. Let us define the $\Sigma$-model $\mathbb{T}$, called {\bf behavioral model}, as follows:

\begin{itemize}
\item $\Delta^\mathbb{T} = \mathcal{P}(N^\infty_R) \times \mathcal{P}(N_C)$;
\item for every $c \in N_C$, $c^\mathbb{T} = \{(R,C) \mid c \in C~\text{and } R \in N^\infty_R\}$;
\item for every $r \in N_R$, $((R_1,C_1),(R_2,C_2)) \in r^\mathbb{T}$ iff  $R_2$ is the set $\{r_1 \cdot r_2 \ldots r_n \ldots \mid r \cdot r_1 \cdot r_2 \ldots r_n \ldots \in R_1\}$.
\end{itemize}
\end{definition}

%
\begin{theorem}
\label{weakly final}
Every behavioral model $\mathbb{T}$ is weakly final.
\end{theorem}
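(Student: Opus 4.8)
The plan is to construct, for an arbitrary $\Sigma$-model $\mathcal{I}$, an explicit morphism $\mu_\mathcal{I} : \Delta^\mathcal{I} \to \Delta^\mathbb{T}$, and then to argue that the construction involves genuine choices, so that such a morphism need not be unique. For each $a \in \Delta^\mathcal{I}$ I would set $\mu_\mathcal{I}(a) = (R_a, N_C^\mathcal{I}(a))$, where the second component is the colouring of $a$ already used throughout Section~\ref{morphism and bisimulation}, and the first component $R_a \subseteq N_R^\infty$ is the \emph{behaviour} of $a$: the set of all finite and infinite words $r_1 r_2 \cdots$ such that there is a path $a = a_0, a_1, \ldots$ in $\mathcal{I}$ with $(a_{i-1}, a_i) \in r_i^\mathcal{I}$ for every $i$. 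Intuitively $R_a$ records exactly which sequences of roles can be followed starting from $a$, which is the information that the word component of $\mathbb{T}$ is designed to store.

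The verification then splits along the clauses of Definition~\ref{morphism}. The colouring clause~(1) is immediate: by the definition of $c^\mathbb{T}$ we have $\mu_\mathcal{I}(a) \in c^\mathbb{T}$ iff $c \in N_C^\mathcal{I}(a)$, i.e. iff $a \in c^\mathcal{I}$. For the forth clause~(2a) I would relate, for $(a,b) \in r^\mathcal{I}$, the behaviour $R_b$ to the $r$-derivative $\{w \mid r \cdot w \in R_a\}$ of $R_a$, checking both inclusions: prefixing a path from $b$ by the edge $(a,b)$ produces a path from $a$, and conversely any path from $a$ whose first edge is $(a,b)$ restricts to a path from $b$. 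Already here one must be careful that the derivative aggregates the behaviours of \emph{all} $r$-successors of $a$, so the clause has to be stated against $r^\mathbb{T}$ exactly as defined. The back clause~(2b) is the reverse matching: given any $a' \in \Delta^\mathbb{T}$ with $(\mu_\mathcal{I}(a), a') \in r^\mathbb{T}$, one must produce an $r$-successor $b$ of $a$ in $\mathcal{I}$ with $\mu_\mathcal{I}(b) = a'$.

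I expect clause~(2b) to be the main obstacle, and it is precisely here that weak, rather than strict, finality enters. The transition relation of $\mathbb{T}$ fixes the word component of every $r$-successor of $(R_a, C_a)$ to be the single set $\{w \mid r \cdot w \in R_a\}$ while leaving its concept component free; so the targets prescribed by $r^\mathbb{T}$ form a family that is deterministic in the word component but branches in the concept component, and one has to exhibit inside $\mathcal{I}$ an actual $r$-edge realising each admissible target. The argument therefore has to be organised around the behavioural reachability encoded in $R_a$, using that every word of $R_a$ beginning with $r$ is witnessed by a concrete $r$-edge of $\mathcal{I}$ and matching colourings accordingly; this book-keeping, reconciling the functional word-component of $\mathbb{T}$ with the relational successor structure of $\mathcal{I}$, is the delicate part of the proof. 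Finally, since distinct individuals of $\mathcal{I}$ may share the same behaviour and since the matching in~(2b) can be carried out in more than one way, $\mu_\mathcal{I}$ is not forced to be unique, which is exactly what yields weak finality of $\mathbb{T}$ rather than finality.
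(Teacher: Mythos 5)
Your construction is exactly the paper's: its proof defines, for each $a \in \Delta^\mathcal{I}$, the set $beh(a)$ of finite and infinite role-words witnessed by paths from $a$ (your $R_a$), sets $\mu_\mathcal{I}(a) = (beh(a), \{c \in N_C \mid a \in c^\mathcal{I}\})$, checks the colouring clause~(1), and then simply asserts that the two role clauses of Definition~\ref{morphism} are ``not difficult'' to verify. So you have matched the paper's approach; the problem is that your writeup stops exactly where the mathematical content lies. You yourself declare clause~(2b) ``the main obstacle'' and then offer only a description of the book-keeping you would hope to perform, not an argument; and your verification of~(2a) is asserted but, as you half-notice, not actually carried out. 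Since the theorem \emph{is} the existence of the morphism, deferring (2a) and (2b) leaves the proof empty.

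Moreover, the deferred steps do not merely need care --- for the map as defined they fail, and the aggregation phenomenon you flag is the fatal point, not a presentational one. Clause~(2a) demands $(\mu_\mathcal{I}(a),\mu_\mathcal{I}(b)) \in r^\mathbb{T}$ for \emph{every} $(a,b) \in r^\mathcal{I}$, i.e. $R_b = \{w \mid r \cdot w \in R_a\}$; but that derivative equals $\bigcup_{(a,b') \in r^\mathcal{I}} R_{b'}$, so equality fails as soon as $a$ has two $r$-successors with distinct behaviours. Concretely, with $N_R = \{r,s\}$, $\Delta^\mathcal{I} = \{a,b,b'\}$, $r^\mathcal{I} = \{(a,b),(a,b')\}$ and $s^\mathcal{I} = \{(b',b')\}$, the word $s$ lies in the derivative of $R_a$ but not in $R_b$, so $\mu_\mathcal{I}(r^\mathcal{I}) \not\subseteq r^\mathbb{T}$. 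Clause~(2b) fails even more directly: since $r^\mathbb{T}$ leaves the concept component of the target completely unconstrained (and imposes no nonemptiness on the derivative), $\mu_\mathcal{I}(a)$ has an $r$-successor $(R',C')$ for every $C' \in \mathcal{P}(N_C)$; a one-point model with empty roles then has no $r$-successor of $a$ to match any of them. So the ``delicate book-keeping'' you postpone cannot be carried out with $\mathbb{T}$ and $\mu_\mathcal{I}$ as given --- any complete proof must either constrain the transition structure of $\mathbb{T}$ (e.g. tie the colour component of successors to the source, or restrict which derivatives generate transitions) or verify something weaker. A further slip: your closing inference of \emph{weak} finality from ``the matching in (2b) can be carried out in more than one way'' is not sound, since (2b) is a property to be checked, not a choice being made; non-uniqueness of $\mu_\mathcal{I}$ comes from different admissible word components, as the paper's subsequent cardinality discussion indicates. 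To be fair, the paper's own proof is silent on precisely clauses (2a) and (2b), so your instinct about where the difficulty lives is sound; but naming the obstacle is not the same as overcoming it.
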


\begin{proof}
Let $\mathcal{I}$ be a $\Sigma$-model. For every $a \in \Delta^\mathcal{I}$, let us note $beh(a) \subseteq \mathcal{P}(N^\infty_R)$ the set defined by:

$$r_1 \ldots r_n \ldots \in beh(a) \Leftrightarrow 
\left\{ \begin{array}{l}
\exists a_o,a_1,\ldots,a_n,\ldots \in \Delta^\mathcal{I}  \text{ such that} \\
\forall i \in \mathbb{N}, (a_i,a_{i+1}) \in r_{i+1} \text{ and} \\
a_0 = a
\end{array} \right.$$

Let us define the mapping $\mu_\mathcal{I}  : a \in \Delta^\mathcal{I} \mapsto (beh(a),\{c \in N_C \mid a \in c^\mathcal{I}\})$. By definition, we have for every $c \in N_C$ and every $a \in \Delta^\mathcal{I}$ that $a \in c^\mathcal{I} \Leftrightarrow \mu(a) \in c^\mathbb{T}$. \\ In the same way, it is not difficult from the definitions of morphism and the behavioral model to show that, for every $r \in N_R$, the two conditions of Definition~\ref{morphism} are satisfied by the mapping $\mu_\mathcal{I} $.
\end{proof}
It is well known that to have a unique morphism $\mu_\mathcal{I} $ between $\mathcal{I}$ and $\mathbb{T}$ some restrictions have to be imposed on the cardinality of the first set $R$ of any element $(R,C) \in \Delta^\mathbb{T}$ (see~\cite{RuttUni}). The reason is that final models are isomorphic, and then by using the notations of coalgebras, we would have that $\mathbb{T} \cong F_\Sigma(\mathbb{T})$ which is a contradiction because for any set $S$ the cardinality of $N_R \times \mathcal{P}(S)$ is greater that of $S$. Therefore, if we restrict the functor $F_\Sigma$ to the functor $F'_\Sigma = \mathcal{P}_{\leq \kappa}(\_)^{N_R} : Set \to Set$ for a given cardinality $\kappa$ where $\mathcal{P}_{\leq \kappa}(S) = \{U \mid U \subseteq S~\mbox{and}~|U| \leq \kappa\}$, then $\mu_\mathcal{I} $ such as defined in the proof of Theorem~\ref{weakly final} is unique and then $\mathbb{T}$ is final\footnote{In \cite{RuttUni}, $\mathbb{T}$ when it is final, it is also said {\bf cofree} on $\mathcal{P}(N_C)$.}.  

\begin{theorem}[Characterization]
\label{characterization}
For any complete covariety $\mathcal{C}v$, there exists a submodel $\mathcal{U}$ of $\mathbb{T}$ such that $\mathcal{C}v = \mathcal{C}v(\mathcal{U})$. 
\end{theorem}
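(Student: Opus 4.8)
The plan is to realize $\mathcal{U}$ as the union, taken inside $\mathbb{T}$, of the images of all models of $\mathcal{C}v$ under their canonical morphisms into the behavioral model, and then to prove the two inclusions $\mathcal{C}v \subseteq \mathcal{C}v(\mathcal{U})$ and $\mathcal{C}v(\mathcal{U}) \subseteq \mathcal{C}v$ separately. Concretely, for each $\mathcal{I} \in \mathcal{C}v$ I would fix the morphism $\mu_\mathcal{I} : \Delta^\mathcal{I} \to \Delta^\mathbb{T}$ supplied by Theorem~\ref{weakly final}, and set $\Delta^\mathcal{U} = \bigcup_{\mathcal{I} \in \mathcal{C}v} \mu_\mathcal{I}(\Delta^\mathcal{I}) \subseteq \Delta^\mathbb{T}$, equipping $\mathcal{U}$ with the structure induced by restriction from $\mathbb{T}$, namely $c^\mathcal{U} = c^\mathbb{T} \cap \Delta^\mathcal{U}$ and $r^\mathcal{U} = r^\mathbb{T} \cap (\Delta^\mathcal{U} \times \Delta^\mathcal{U})$.

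First I would verify that $\mathcal{U}$ is genuinely a submodel of $\mathbb{T}$, i.e. that the inclusion $\Delta^\mathcal{U} \hookrightarrow \Delta^\mathbb{T}$ is a monomorphism. Conditions (1) and (2a) of Definition~\ref{morphism} hold by the very choice of the induced structure; the only point needing an argument is the back condition (2b), which here amounts to showing that $\Delta^\mathcal{U}$ is closed under $r^\mathbb{T}$-successors. This is exactly where the back condition of each $\mu_\mathcal{I}$ is used: if $u = \mu_\mathcal{I}(a) \in \Delta^\mathcal{U}$ and $(u,t) \in r^\mathbb{T}$, then, $\mu_\mathcal{I}$ being a morphism, there is $b \in \Delta^\mathcal{I}$ with $(a,b) \in r^\mathcal{I}$ and $\mu_\mathcal{I}(b) = t$, whence $t = \mu_\mathcal{I}(b) \in \Delta^\mathcal{U}$. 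The same computation shows that each image $\mu_\mathcal{I}(\Delta^\mathcal{I})$ is a submodel and that the corestriction $\mathcal{I} \to \mu_\mathcal{I}(\Delta^\mathcal{I})$ is an epimorphism.

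For $\mathcal{C}v \subseteq \mathcal{C}v(\mathcal{U})$, take any $\mathcal{I} \in \mathcal{C}v$. Since the image of $\mu_\mathcal{I}$ lies in $\Delta^\mathcal{U}$, its corestriction is a morphism $\mathcal{I} \to \mathcal{U}$ (conditions (1), (2a), (2b) transfer from $\mathbb{T}$ to $\mathcal{U}$ precisely because $\mathcal{U}$ is a submodel). As $\mathcal{U} \in \mathcal{C}v(\mathcal{U})$ and $\mathcal{C}v(\mathcal{U})$ is closed under morphism domain, we obtain $\mathcal{I} \in \mathcal{C}v(\mathcal{U})$. For the reverse inclusion, since $\mathcal{C}v$ is itself a complete covariety, it suffices to show $\mathcal{U} \in \mathcal{C}v$: that will force $\mathcal{C}v(\mathcal{U}) \subseteq \mathcal{C}v$ by minimality of the generated covariety. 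To prove $\mathcal{U} \in \mathcal{C}v$, choose for every $u \in \Delta^\mathcal{U}$ a model $\mathcal{I}_u \in \mathcal{C}v$ and an element $a_u$ with $\mu_{\mathcal{I}_u}(a_u) = u$, form the coproduct $\mathcal{S} = \sum_{u \in \Delta^\mathcal{U}} \mathcal{I}_u$ indexed by the \emph{set} $\Delta^\mathcal{U}$ (this is what circumvents the fact that $\mathcal{C}v$ may be a proper class), and observe $\mathcal{S} \in \mathcal{C}v$ by closure under coproduct. The copairing of the $\mu_{\mathcal{I}_u}$ yields a morphism $\mathcal{S} \to \mathbb{T}$ whose image is exactly $\Delta^\mathcal{U}$, hence an epimorphism $\mathcal{S} \to \mathcal{U}$; closure under homomorphic images then gives $\mathcal{U} \in \mathcal{C}v$.

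The main obstacle, and the step deserving the most care, is the verification that the union $\bigcup_\mathcal{I} \mu_\mathcal{I}(\Delta^\mathcal{I})$ is a subcoalgebra of $\mathbb{T}$, i.e. the successor-closure argument above, together with the set-theoretic bookkeeping needed to exhibit $\mathcal{U}$ as a homomorphic image of a coproduct indexed by a set rather than by the whole (possibly proper) class $\mathcal{C}v$. Once these are settled, the two inclusions combine to give $\mathcal{C}v = \mathcal{C}v(\mathcal{U})$.
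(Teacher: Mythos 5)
Your proof is correct and follows essentially the same route as the paper's: define $\Delta^\mathcal{U}$ as the union of the images $\mu_\mathcal{I}(\Delta^\mathcal{I})$ inside $\mathbb{T}$, obtain $\mathcal{C}v \subseteq \mathcal{C}v(\mathcal{U})$ via closure under morphism domain, and obtain the converse by showing $\mathcal{U} \in \mathcal{C}v$ as a homomorphic image of the coproduct $\sum_{u \in \Delta^\mathcal{U}} \mathcal{I}_u$ indexed by the set $\Delta^\mathcal{U}$. If anything, you are more careful than the paper at two points it glosses over: the explicit successor-closure argument showing that $\mathcal{U}$ is genuinely a submodel of $\mathbb{T}$, and replacing the paper's imprecise remark that each $\mu_\mathcal{I}$ is a monomorphism (it need not be injective) with the corestriction-plus-morphism-domain argument, which is all that is needed.
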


\begin{proof}
Let $\mathcal{C}v$ be a complete covariety. Let us define the model $\mathcal{U}$ as:
$$\Delta^\mathcal{U} = \bigcup \{\mu_\mathcal{I}(\Delta^\mathcal{I}) \mid \mathcal{I} \in \mathcal{C}v\}$$
where $\mu_{\mathcal{I}}: \Delta^\mathcal{I} \to \Delta^\mathbb{T}$ is any morphism defined in Theorem~\ref{weakly final}. 

Obviously, each $\mu_\mathcal{I}(\Delta^\mathcal{I})$ is embedded into $\Delta^\mathcal{U}$, i.e. $\mu_\mathcal{I} : \Delta^\mathcal{I} \to \Delta^\mathcal{U}$ is a monomorphism. Finally, the union of embeddings is again an embedding of $\mathcal{U}$. This allows us to conclude that $\mathcal{C}v \subseteq \mathcal{C}v(\mathcal{U})$. \\ For the converse, let us prove first that $\mathcal{U} \in \mathcal{C}v$. For every $i \in \Delta^\mathcal{U}$, let us choose a model $\mathcal{I}_i \in \mathcal{C}v$ such that $i \in \mu_{\mathcal{I}_i}(\mathcal{I}_i)$. Obviously, we have an epimorphism $q : \sum_{i \in \Delta^\mathcal{U}} \mathcal{I}_i \to \mathcal{U}$ which allows us to conclude that $\mathcal{U} \in \mathcal{C}v$. Now, every model $\mathcal{I} \in \mathcal{C}v(\mathcal{U})$ is obtained by application of morphism domain, homomorphic image and coproduct operators from $\mathcal{U}$. By induction over the way $\mathcal{I}$ has been obtained, we can easily show that $\mathcal{I} \in \mathcal{C}v$. 
\end{proof}

We have the following result which gives one direction of the dual of Birkhoff's variety theorem for complete covarieties of $\Sigma$-models\footnote{Birkhoff's variety theorem for algebras~\cite{BS81} states that any class of algebras is closed under the formation of subalegbras, homomorphic images and product if and only if it is equationally definable.} with respect to GCIs. Let $\mathbb{M}$ be a class of $\Sigma$-models. We say that $\mathbb{M}$ is a {\bf GCI class} if there exists a set $\mathcal{T}$ of $\Sigma$-GCIs such that for every $\Sigma$-model $\mathcal{I}$, we have:
$$(\forall C \sqsubseteq D \in \mathcal{T}, \mathcal{I} \models C \sqsubseteq D) \Longleftrightarrow \mathcal{I} \in \mathbb{M}$$ 

\begin{theorem}
Let $\mathbb{M}$ be a GCI class. Then, $\mathbb{M}$ is a complete covariety.
\end{theorem}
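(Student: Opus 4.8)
The plan is to unwind the definition of a GCI class and verify the three closure conditions directly, each from a preservation result already established in Section~\ref{morphism and bisimulation} and Section~\ref{basic constructions}. Since $\mathbb{M}$ is a GCI class, fix a set $\mathcal{T}$ of $\Sigma$-GCIs such that a $\Sigma$-model belongs to $\mathbb{M}$ exactly when it satisfies every GCI in $\mathcal{T}$. To conclude that $\mathbb{M}$ is a complete covariety, I only need to check that $\mathbb{M}$ is closed under morphism domain, homomorphic image, and coproduct. The whole point is that membership in $\mathbb{M}$ is equivalent to satisfying a fixed set of GCIs, so each closure property reduces to a statement about how GCIs transfer across the corresponding construction.

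First I would treat \emph{morphism domain closure}. Suppose $\mathcal{I}' \in \mathbb{M}$ and there is a morphism $\mu : \Delta^\mathcal{I} \to \Delta^{\mathcal{I}'}$. Then $\mathcal{I}'$ satisfies every $C \sqsubseteq D \in \mathcal{T}$, and by Corollary~\ref{corollary morphism} each such GCI is preserved from $\mathcal{I}'$ to the domain $\mathcal{I}$, so $\mathcal{I} \models C \sqsubseteq D$ for all GCIs in $\mathcal{T}$; hence $\mathcal{I} \in \mathbb{M}$. Next, for \emph{homomorphic image closure}, suppose $\mathcal{I} \in \mathbb{M}$ and there is an epimorphism $\mu : \Delta^\mathcal{I} \to \Delta^{\mathcal{I}'}$. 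Applying the forward direction of Theorem~\ref{preservation through quotient} to each $C \sqsubseteq D \in \mathcal{T}$ yields $\mathcal{I}' \models C \sqsubseteq D$, so $\mathcal{I}' \in \mathbb{M}$.

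Finally, for \emph{coproduct closure}, let $(\mathcal{I}_i)_{i \in \Lambda}$ be a family of models in $\mathbb{M}$ and let $\mathcal{I}'$ be their coproduct. Each $\mathcal{I}_i$ satisfies every $C \sqsubseteq D \in \mathcal{T}$, so by the ``$\Rightarrow$'' direction of Corollary~\ref{corollary coproduct} the coproduct $\mathcal{I}'$ satisfies every GCI in $\mathcal{T}$, whence $\mathcal{I}' \in \mathbb{M}$. Combining the three cases shows $\mathbb{M}$ is closed under all three operators defining a complete covariety, which completes the argument.

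I do not expect a genuine obstacle here: the theorem is a direct corollary of the three preservation lemmas, and the only thing requiring care is matching the \emph{direction} of each preservation statement to the corresponding closure direction (domain versus image, and the ``all $i$'' versus coproduct quantification). One minor point worth flagging is that the index set $\Lambda$ in the coproduct may be arbitrary (in particular infinite), but Corollary~\ref{corollary coproduct} is stated for an arbitrary family $(\mathcal{I}_i)_{i \in \Lambda}$, so no finiteness assumption is needed and the argument goes through unchanged.
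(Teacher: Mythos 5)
Your proof is correct and is essentially the paper's own argument: both rest on exactly the same three preservation results (Corollary~\ref{corollary morphism} for morphism domains, Theorem~\ref{preservation through quotient} for homomorphic images, Corollary~\ref{corollary coproduct} for coproducts). The only difference is presentational: the paper packages the verification as the identity $\mathcal{C}v(\mathbb{M}) = \mathbb{M}$, whereas you unfold it into the three closure conditions directly, which if anything makes explicit the induction left implicit in the paper's claim that every model of $\mathcal{C}v(\mathbb{M})$ satisfies $\mathcal{T}$.
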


\begin{proof}
Let $\mathcal{T}$ be the set of GCIs satisfied by all the models in $\mathbb{M}$. Therefore, by Theorem~\ref{preservation through quotient}, and Corollaries~\ref{corollary morphism} and \ref{corollary coproduct} we have for every model $\mathcal{I} \in Cv(\mathbb{M})$ that  $\mathcal{I} \models \mathcal{T}$ and then $\mathcal{C}v(\mathbb{M}) \subseteq \mathbb{M}$. As we obviously have that $\mathbb{M} \subseteq \mathcal{C}v(\mathbb{M})$, then we can conclude that $\mathcal{C}v(\mathbb{M}) = \mathbb{M}$.
\end{proof}

Unfortunately, the opposite direction, which would state that any complete covariety is a GCI class, fails. Conventionally, this result requires that the class of models for a given signature has final models. However, this is not enough. In fact, to have the property that complete covarieties are GCI classes, we should be able to express formulas that describe behaviors, i.e. formulas that ensure the existence of paths of the form $r_1 \ldots r_n \ldots$. Indeed, given a complete covariety $\mathcal{C}v$, if we denote $GCI(\mathcal{C}v) = \{C \sqsubseteq D \mid \forall \mathcal{I} \in \mathcal{C}v, \mathcal{I} \models C \sqsubseteq D\}$ and $\mathbb{M} = \{\mathcal{I} \mid \mathcal{I} \models GCI(\mathcal{C}v)\}$, then clearly, by Theorem~\ref{preservation through quotient}, and Corollaries~\ref{corollary morphism} and \ref{corollary coproduct}, $\mathbb{M}$ is a complete covariety. Moreover, we have that $\mathcal{C}v \subseteq \mathbb{M}$ and $GCI(\mathbb{M}) = GCI(\mathcal{C}v)$. \\ 
By Theorem~\ref{characterization}, we know that there exist $\mathcal{U}$ and $\mathcal{U}'$ submodels of the final model $\mathbb{T}$ such that $\mathcal{C}v = \mathcal{C}v(\mathcal{U})$ and $\mathbb{M} = \mathcal{C}v(\mathcal{U}')$. We then have that  $\mathcal{U}$ is a submodel of $\mathcal{U}'$ and: $\forall C \sqsubseteq D, \mathcal{U} \models C \sqsubseteq D \Leftrightarrow \mathcal{U}' \models C \sqsubseteq D$. However, as noted above, $\mathcal{U}'$ may not be a submodel of $\mathcal{U}$ (which would lead to $\mathcal{U}' = \mathcal{U}$ and then $\mathcal{C}v = \mathbb{M}$) because to ensure that, we should be able to express properties about model behavior. But such formulas on behavior are not expressible by GCIs, and then the logic $\cal{ALC}$ is not expressive enough according to the definition given by A. Kurz in~\cite{Kurz00}.

\medskip
We can now simply extend  Theorem~\ref{Birkhof theorem} to finitely generated complete covariety.

\begin{theorem}
Let $\Sigma$ be a signature. Let $\mathcal{C}v$ be a finitely generated complete covariety over $\Sigma$. Then, $GCI(Cv)$ is finitely based.
\end{theorem}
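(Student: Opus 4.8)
The plan is to reduce the statement to the already-proven finite-basis theorem for a single finite model (Theorem~\ref{Birkhof theorem}) by exploiting the fact that all three closure operations defining a complete covariety preserve GCIs. Writing $\mathcal{C}v = \mathcal{C}v(\mathcal{K})$ for a finite set $\mathcal{K} = \{\mathcal{I}_1,\ldots,\mathcal{I}_k\}$ of finite models, the first goal is to show that the covariety and its finite generating set satisfy exactly the same GCIs, i.e. $GCI(\mathcal{C}v) = GCI(\mathcal{K})$, where $GCI(\mathcal{K}) = \{C \sqsubseteq D \mid \forall i,\ \mathcal{I}_i \models C \sqsubseteq D\}$.

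The inclusion $GCI(\mathcal{C}v) \subseteq GCI(\mathcal{K})$ is immediate from antitonicity of $GCI$, since $\mathcal{K} \subseteq \mathcal{C}v$. For the converse I would argue by induction on the way an arbitrary $\mathcal{I} \in \mathcal{C}v(\mathcal{K})$ is built from $\mathcal{K}$ using morphism domain, homomorphic image, and coproduct. At each construction step the relevant preservation result applies: Corollary~\ref{corollary morphism} handles morphism domains, Theorem~\ref{preservation through quotient} handles homomorphic images (epimorphisms), and Corollary~\ref{corollary coproduct} handles coproducts. Consequently any GCI true throughout $\mathcal{K}$ remains true in every model of $\mathcal{C}v$, giving $GCI(\mathcal{K}) \subseteq GCI(\mathcal{C}v)$ and hence equality.

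Next I would collapse the finite family $\mathcal{K}$ to a single finite model by forming the coproduct $\mathcal{J} = \sum_{i=1}^{k} \mathcal{I}_i$. Since a coproduct of finitely many finite models over a finite index set has finite carrier, $\mathcal{J}$ is again a finite $\Sigma$-model, and Corollary~\ref{corollary coproduct} yields $\mathcal{J} \models C \sqsubseteq D$ if and only if $\mathcal{I}_i \models C \sqsubseteq D$ for all $i$, that is $GCI(\{\mathcal{J}\}) = GCI(\mathcal{K}) = GCI(\mathcal{C}v)$.

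Finally, Theorem~\ref{Birkhof theorem} applied to the finite model $\mathcal{J}$ produces a finite set $\mathcal{T}$ of GCIs that is sound and complete for $\{\mathcal{J}\}$. Because $GCI(\{\mathcal{J}\}) = GCI(\mathcal{C}v)$, the same $\mathcal{T}$ is sound ($\mathcal{T} \subseteq GCI(\mathcal{C}v)$) and complete ($\mathcal{T} \models GCI(\mathcal{C}v)$) for $\mathcal{C}v$, establishing that $GCI(\mathcal{C}v)$ is finitely based. I expect the only genuinely substantive step to be the induction establishing $GCI(\mathcal{K}) \subseteq GCI(\mathcal{C}v)$: this is where the three preservation lemmas of Section~\ref{morphism and bisimulation} together with the coproduct corollary must all be invoked, and care is needed to verify that an arbitrary member of the generated covariety is indeed reachable by finitely many applications of the closure operators. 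Everything else is a direct appeal to the finite-model case.
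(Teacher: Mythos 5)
Your proof is correct and takes essentially the same route as the paper: equate $GCI(\mathcal{C}v)$ with the GCIs of the single finite coproduct model via Theorem~\ref{preservation through quotient} and Corollaries~\ref{corollary morphism} and~\ref{corollary coproduct}, then apply Theorem~\ref{Birkhof theorem}; the paper merely collapses $\mathcal{K}$ to $\sum_{\mathcal{I} \in \mathcal{K}} \mathcal{I}$ before equating GCI sets rather than after. The induction over covariety construction that you rightly flag as the delicate step is exactly what the paper also leaves implicit, and it can be bypassed entirely by observing that the class of all models satisfying $GCI(\mathcal{K})$ is itself a complete covariety containing $\mathcal{K}$ (by the same three preservation results), hence contains $\mathcal{C}v(\mathcal{K})$ by minimality of the generated covariety.
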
 

\begin{proof}
Let $\mathcal{K}$ be the finite set of finite $\Sigma$-models such that $\mathcal{C}v = \mathcal{C}v(\mathcal{K})$. It is easy to show that $\mathcal{C}v(\mathcal{K}) = \mathcal{C}v(\sum_{\mathcal{I} \in \mathcal{K}} \mathcal{I})$. By Theorem~\ref{preservation through quotient}, and Corollaries~\ref{corollary morphism} and \ref{corollary coproduct}, we have that $GCI(\mathcal{C}v) = GCI(\{\sum_{\mathcal{I} \in \mathcal{K}} \mathcal{I}\})$. But, by Theorem~\ref{Birkhof theorem}, we know that $GCI(\{\sum_{\mathcal{I} \in \mathcal{K}} \mathcal{I}\})$ is finitely based, then so is $GCI(\mathcal{C}v)$. 
\end{proof}

\section{Conclusion}
The aim of this paper was threefold: (i) proving the existence of a finite basis for the description logic $\mathcal{ALC}$, (ii) studying the conditions to effectively build this finite basis, and introducing a concrete algorithm to do so, and (iii) extending this result to complete covarieties. 

Characterizing and building a finite basis for a given description is of prime importance in several ontology-related applications, such as learning terminologies or  combining formal concept analysis and description logics, which paves the way for further non-classical reasoning services (e.g. axiom pinpointing, etc.). Generalizing our result to complete covarieties was motivated by building a bridge between the work on description logics and abstract algebra, thus enlarging its scope to other knowledge representation formalisms. Future work will deal with the implementation of the algorithm in $\mathcal{ALC}$, and studying the impact of the theorem on complete covarieties to other logical formalisms.

\bibliographystyle{elsarticle-num}
\bibliography{biblio}

\end{document}